
\documentclass[12pt]{article}
\usepackage{eurosym}
\usepackage{amssymb}
\usepackage{amsfonts}
\usepackage[onehalfspacing]{setspace}


\newtheorem{theorem}{Theorem}

\newtheorem{corollary}[theorem]{Corollary}

\newtheorem{definition}{Definition}

\newtheorem{lemma}{Lemma}

\newtheorem{proposition}{Proposition}
\newtheorem{remark}[theorem]{Remark}

\newenvironment{proof}[1][Proof]{\noindent\textbf{#1.} }{\ \rule{0.5em}{0.5em}}
\input{tcilatex}
\begin{document}

\title{Competitive Markets with Imperfectly Discerning Consumers\thanks{%
Financial support from UKRI Frontier Research Grant no. EP/Y033361/1 is
gratefully acknowledged. We thank Meg Meyer, Marco Scarsini, Benny Moldovanu
and Eylon Solan for helpful comments.}}
\author{Yair Antler and Ran Spiegler\thanks{%
Antler: Tel Aviv University. Spiegler: Tel Aviv University and University
College London}}
\maketitle

\begin{abstract}
We develop a market model in which products generate state-dependent
potential hidden charges. Firms differ in their ability to realize this
potential. Unlike firms, consumers do not observe the state. They try to
infer hidden charges from market prices, using idiosyncratic subjective
models. We show that an interior\ competitive equilibrium is uniquely given
by what is formally a Bellman\ equation. We leverage this representation to
characterize equilibrium headline prices, add-on charges and welfare. Market
responses to shocks display patterns that are impossible under rational
expectations. For example, equilibrium prices can be fully revealing and yet
vary with consumers' private information.\bigskip \bigskip \bigskip \bigskip
\pagebreak 
\end{abstract}

\section{Introduction}

One of the deepest ideas in the history of economic thought is that
competitive markets aggregate private information through the price
mechanism (Hayek (1945), Radner (1979)). According to this idea,
competitive-equilibrium prices signal unobserved payoff-relevant features.
Under mild assumptions, market participants can perfectly invert the
equilibrium price signal and effectively make informed choices, as if all
payoff-relevant information were public.

This logic relies heavily on the assumption that market participants can
make flawless inferences from equilibrium prices. However, in reality, some
participants (particularly consumers) possess a limited grasp of the
systematic relation between prices and latent variables. For example,
consumers may have a broad sense that price and quality are correlated, or
that a low headline price implies hidden costs (as captured by mottos like
\textquotedblleft there is no such thing as a free lunch\textquotedblright\
or \textquotedblleft If you're not paying for the product, you are the
product\textquotedblright ), yet lack a precise understanding of such
relations.

In this paper, we develop a model of a competitive market in which consumers
draw imperfect inferences from equilibrium prices. In our model, consumers
who purchase a product pay a headline price, as well as a hidden charge that
is an endogenous response by firms to an unobserved state of nature. In
equilibrium, there are no \textquotedblleft free lunches\textquotedblright :
A low headline price tends to signal a high hidden charge. However,
consumers differ in their ability to infer latent features from the headline
price: Fully rational consumers make perfect inferences, while others
perform imperfect inferences based on idiosyncratic, imprecise subjective
models. In other words, consumers are \textit{diversely discerning}.

Our model enables us to reexamine classic economic questions: How do
competitive forces determine the types of firms that survive in the market?
Could a rational market participant deduce latent information from
equilibrium prices? The model also raises new questions directly related to
consumers' partial and diverse sophistication: Does the presence of
imperfectly discerning consumers have a systematic effect on prices,
allocations, and welfare? How does it shape the distribution of equilibrium
payments between salient and latent components? Does it affect the way
market outcomes respond to external shocks?

The supply side in our model consists of a continuum of firms that observe
an aggregate state defined by a collection of exogenous variables, before
deciding whether to offer a basic product at a cost. Each state defines a
distinct potential \textquotedblleft add-on charge\textquotedblright . Firms
differ in their ability to realize this potential. Specifically, when a
consumer buys from a type-$\pi $ firm in some state, he pays the firm a
headline price as well as an add-on charge which is a fraction $\pi $ of the
potential add-on in that state. We assume that $\pi $ is uniformly
distributed, which conveniently generates a linear supply function.

Heterogeneity in $\pi $ can reflect differences in firms' ingenuity in
devising hidden fees. This view of exploitative hidden charges is in the
spirit of Heidhues et al. (2016), who regard them as fruits of firms'
initiative against market constraints. Other sources of heterogeneity among
firms is different exposure to regulatory restrictions, or different ex-post
opportunities for consumers to substitute away from the latent charges.
Likewise, there are several interpretations for the multiple state variables
that determine the potential add-on charge. First, there may be several
channels for hidden charges (e.g., bank fees for various financial
services); a state variable can indicate the feasibility of a particular
channel. Second, state variables can represent regulations that constrain
hidden fees. Finally, a state variable can indicate whether buying the basic
product enables the firm to extract specific private information from the
consumer and then use it to its own advantage at the consumer's expense.

The demand side in our model consists of a continuum of consumers. Each
consumer knows his idiosyncratic bare willingness to pay for the product.
However, he is uninformed about the state or the type of firm he will buy
from, and therefore aims to infer the expected add-on charge from the market
(headline) price. Consumers are classified into \textquotedblleft cognitive
types\textquotedblright\ (there is a large measure of each type). A consumer
of type $M$ considers only a subset $M$ of the state variables, ignoring the
rest (because he is unaware of them or deems them irrelevant). The consumer
infers the state variables in $M$ from the market price and forms an
estimate of the expected add-on charge based on this inference.

When $M$ omits state variables, this is a model of \textquotedblleft coarse\
beliefs\textquotedblright\ in the spirit of Eyster and Rabin (2005), Jehiel
(2005), or Eyster and Piccione (2013). More specifically, its account of how
consumers infer add-ons from prices is reminiscent of Mailath and
Samuelson's (2020) \textquotedblleft model-based inference\textquotedblright
. In Section 5, we present a more general formalism of belief formation by
Spiegler (2016), which is based on the notion of \textit{subjective causal
models}. This formalism subsumes coarse beliefs\ as a special case; our
choice to focus on the latter in our basic model serves expositional
simplicity.

\textit{Competitive equilibrium} is an assignment of prices to states, such
that each consumer optimizes with respect to his subjective belief given the
headline price; each firm offers the product if and only if this is
profitable given the headline price, the state, and the firm's type $\pi $;
and supply equals demand in every state. An equilibrium is \textit{interior}
if there are both active and inactive firms in each state. As long as there
is variation in consumers' bare willingness to pay for the product (i.e.,
demand is downward sloping), interior equilibria are \textit{fully revealing}%
, such that rational consumers can infer the expected add-on from the
equilibrium price. By comparison, boundedly rational consumers can only
partially decipher the signal that equilibrium prices provide, and therefore
form wrong add-on estimates.

Our subsequent analysis of interior equilibrium focuses on the limit case in
which the variation in consumers' bare willingness to pay is negligible. In
this limit, since there are many consumers of each type, the equilibrium
price in each state is driven by the cognitive type with the lowest estimate
of the expected add-on (which itself is inferred from the equilibrium
price). An interior equilibrium exists for a range of values of our
primitives. Moreover, it is \textit{uniquely} characterized by what is
formally a \textit{Bellman equation} --- as if \textquotedblleft the
market\textquotedblright\ tries to minimize a discounted average of add-on
charges across states.

We use the \textquotedblleft Bellman\textquotedblright\ characterization to
probe the structure of interior equilibrium. First, we analyze the effects
of expanding the set of cognitive types --- e.g., when \textquotedblleft
coarse\textquotedblright\ consumers are introduced into a population of
rational consumers. The expected equilibrium add-on weakly \textit{decreases}
in every state, while the headline price weakly \textit{rises}. Thus, making
the consumer population cognitively more diverse shifts equilibrium payments
from latent to salient components. We use this finding to show that compared
with rational-expectations equilibrium, the expected headline price in
interior equilibrium of our model is weakly higher, and the expected add-on
charge is weakly lower. The lowest possible expected add-on is obtained when
all firm types are active in every state. We show that this lower bound can
be approximated in equilibrium under a suitable selection of primitives.

The mechanism behind these results is that expanding the set of cognitive
types weakly raises the maximal net willingness to pay in each state. This
increase in demand leads to higher headline prices, which in turn alter the
composition of firms in the market. Lower-$\pi $ firms (which are less adept
at devising exploitative hidden charges) enter the market, i.e., the volume
of trade increases and the average add-on decreases. Note that the former
effect enhances social welfare, as trade is socially beneficial in our
model, but it comes at the expense of boundedly rational consumers who
incorrectly anticipate an even lower add-on charge and consequently suffer a
welfare loss.

Another implication of the \textquotedblleft Bellman\textquotedblright\
characterization concerns the $range$ of price components in equilibrium.
The range of add-on charges is narrower relative to the
rational-expectations benchmark; and when rational consumers are present,
the range of equilibrium headline prices is wider. This is another sense in
which consumers' bounded rationality shifts the \textquotedblleft
action\textquotedblright\ in the market from hidden to visible price
components.

Our modeling framework also allows us to analyze markets in which hidden
product features are not exploitative but \textit{mutually beneficial}. For
example, consider a follow-up service contract, or an aspect of product
quality that also boosts the firm's reputation. To capture these scenarios,
we modify the basic model by assuming that each state generates a distinct
latent surplus that $both$ consumer and firm enjoy. The unique interior
equilibrium is characterized by a quasi-Bellman equation like the one we
derive for the basic model, except that the discount factor is \textit{%
negative}. This sign difference carries distinct equilibrium implications:
Expanding the set of cognitive types need not have a uniform effect on
latent payoffs across states; and it \textit{lowers} social welfare when
rational consumers are present in the market.

As mentioned above, our results readily extend to the more general model in
which consumers form beliefs according to subjective causal models
(represented by so-called \textit{perfect} directed acyclic graphs). This
more general formalism can capture wider varieties of imperfectly discerning
consumers (e.g., understanding that demand for add-ons drives hidden
charges, while failing to realize that it also influences the product's
headline price). It also generates novel market responses to external
shocks. First, supply and demand responses to shocks can be virtually
independent even when the direct payoff implications of these shocks are
perfectly correlated across the two sides of the market. Second, when
consumers receive private signals about states, equilibrium\ prices can
reflect these signals in addition to the payoff-relevant state, even though
prices perfectly reveal the latter. Thus, the presence of imperfectly
discerning consumers can lead to excessive price fluctuations, in the sense
that prices respond to factors beyond economic fundamentals.

\section{The Model}

Consider a market for a product with salient and latent components. Let $%
\phi $ denote the (headline) price at which the product is traded. We now
describe supply and demand in this market, and then define competitive
equilibrium.\bigskip

\noindent \textit{Supply}

\noindent There is a measure one of firms. When a firm enters the market, it
offers a product at a cost $c$. Let $\Theta =\Theta _{1}\times \cdots \times
\Theta _{n}$ be a finite set of exogenous \textit{states}. Let $\mu \in
\Delta (\Theta )$ be a full-support distribution over states. Let $S:\Theta
\rightarrow 
\mathbb{R}
_{++}$ be a one-to-one function. Denote $S^{\max }=\max_{\theta }S(\theta )$%
, $S^{\min }=\min_{\theta }S(\theta )$, and $\bar{S}=\sum_{\theta }\mu
(\theta )S(\theta )$. The quantity $S(\theta )$ represents the maximal
potential hidden charge in state $\theta $. A firm's \textit{type} is $\pi
\sim U[0,1]$, representing the firm's ability to realize the exploitative
potential. When a consumer purchases a product from a firm of type $\pi $ in
state $\theta $, a subsequent transfer of $\pi S(\theta )$ from the consumer
to the firm (in addition to the price $\phi $) is realized. We often refer
to this transfer as an \textit{add-on}. It is hidden, in the sense that
consumers do not observe it when purchasing the product.

A firm of type $\pi $ enters the market in state $\theta $ given the price $%
\phi $ if and only if it earns a non-negative profit, i.e.,%
\begin{equation}
\phi -c+\pi S(\theta )\geq 0  \label{active condition}
\end{equation}%
Let $\pi ^{\star }(\theta ,\phi )$ be the value of $\pi $ that satisfies (%
\ref{active condition}) bindingly. Total supply under $(\theta ,\phi )$ is
the measure of active firms, which is equal to $1-\pi ^{\star }(\theta ,\phi
)$ (as long as $\pi ^{\star }(\theta ,\phi )\in \lbrack 0,1]$). Thanks to
the assumption that $\pi $ is uniformly distributed, we obtain a linear
supply function in each state. The add-on value among active firms given $%
(\theta ,\phi )$ is thus a random variable, denoted $q$ and distributed as
follows:%
\begin{equation}
q\mid (\theta ,\phi )\sim U\left[ \pi ^{\star }(\theta ,\phi )S(\theta
),S(\theta )\right]   \label{quality supply}
\end{equation}%
\bigskip The expected add-on given $(\theta ,\phi )$ is%
\begin{equation}
\bar{q}(\theta )=\frac{1+\pi ^{\star }(\theta ,\phi )}{2}S(\theta )
\label{quality}
\end{equation}%
\bigskip 

\noindent \textit{Demand}

\noindent There is a large population of consumers. Let $v$ be the
consumer's bare valuation of the product, and assume it is distributed
continuously over $[v^{\ast }-\varepsilon ,v^{\ast }+\varepsilon ]$. When
the consumer buys the product in state $\theta $ at a headline price $\phi $
from a type-$\pi $ firm, his net payoff is $v-\phi -\pi S(\theta )$. Each
consumer is informed of his $v$, yet he is uninformed of $\theta $ and the
type $\pi $ of the firm he interacts with when buying a product. He tries to
infer the expected latent add-on from the market price. We will usually
assume that $\varepsilon $ is small, such that consumer preferences are
nearly homogenous. The social surplus generated by the product in the $%
\varepsilon \rightarrow 0$ limit is $\Delta =v^{\ast }-c>0$. 

Let $\mathcal{M}$ be a finite set of \textquotedblleft cognitive
types\textquotedblright . The measure of consumers of each type is greater
than one. Every $M\in \mathcal{M}$ is a distinct subset of the set $%
\{1,...,n\}$ of exogenous variables. The interpretation is that a type-$M$
consumer is unaware of variables outside $M$, or deems them irrelevant.

Given an objective distribution $p$ over $(\theta ,q,\phi )$, a type-$M$
consumer forms the following subjective belief over the latent add-on $q$
conditional on the observed price $\phi $ (as long as $\phi $ is realized
with positive probability under $p$):%
\begin{equation}
p_{M}(q\mid \phi )=\sum_{\theta _{M}}p(\theta _{M}\mid \phi )p(q\mid \theta
_{M})  \label{coarse beliefs}
\end{equation}

This formula represents a thought process that bears close resemblance to
Mailath and Samuelson's (2020) \textquotedblleft model-based
inference\textquotedblright . The consumer infers the exogenous variables in
his subjective model from observed prices, based on correct long-run
statistical data. He then uses this intermediate inference to predict the
add-on, again based on correct long-run data. His error is that he omits
exogenous variables that confound the relation between price and add-on. In
other words, the error can be described as \textquotedblleft confounder
neglect\textquotedblright\ (see Spiegler (2023)). In Section 5, we embed the
belief-formation model given by (\ref{coarse beliefs}) in a more general
formalism due to Spiegler (2016), in which consumers perceive market
regularities through the prism of a subjective causal model.

A key property of (\ref{coarse beliefs}) is that it is unbiased \textit{on
average} --- i.e.,%
\[
\sum_{\phi }p(\phi )p_{M}(q\mid \phi )\equiv p(q)
\]%
Thus, while the consumer may fail to draw correct add-on inferences from
prices, the forecasts are not systematically biased. This distinguishes our
model from a strand in the literature that includes Gabaix and Laibson
(2006) and Heidhues et al. (2016,2017), where consumers neglect hidden
charges altogether.

A consumer of cognitive type $M$ is active given the price $\phi $ if $v\geq
\phi +E_{M}(q\mid \phi )$, where $E_{M}(q\mid \phi )$ is the expected add-on
conditional on $\phi $ according to (\ref{coarse beliefs}). The demand
contributed by type-$M$ consumers is the measure of such consumers who
satisfy this inequality given $p$.\bigskip 

\noindent \textit{Equilibrium}

\noindent Consider a function $h$ from states $\theta $ to prices $\phi $.
This function, the objective distribution $\mu $ over states, and the
distribution of active firms given by (\ref{quality supply}), induce the
joint probability measure $p$ over $\theta ,\phi ,q$. In particular, $%
p(\theta )=\mu (\theta )$ and $p(\phi =h(\theta )\mid \theta )=1$ for every $%
\theta $. This is the objective distribution that type-$M$ consumers distort
into $p_{M}(q\mid \phi )$.

We say that $h$ is a \textit{competitive equilibrium} if for every pair $%
(\theta ,h(\theta ))$, total supply is equal to the total demand induced by
the distribution $p$ (which in turn is shaped by $h$). We say that a
competitive equilibrium is \textit{interior} if $\pi ^{\star }(\theta
,h(\theta ))\in (0,1)$ for every $\theta $ --- that is, there are positive
measures of both active and inactive firms in each state.

\subsection{Full Information Revelation}

A basic question in models of competitive markets with imperfectly informed
agents is whether equilibrium prices reveal the aggregate state $\theta $.
It turns out that interior equilibria in our model are fully
revealing.\bigskip

\begin{proposition}
\label{fulll reveal}In every interior equilibrium $h$, $\theta \neq \theta
^{\prime }$ implies $h(\theta )\neq h(\theta ^{\prime })$.
\end{proposition}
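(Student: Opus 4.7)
The plan is to argue by contradiction: assume there exist distinct states $\theta \neq \theta'$ with $h(\theta) = h(\theta') = \phi$, and derive a contradiction with the one-to-one property of $S$.

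The key observation is that demand at a given headline price depends only on that price, not on the underlying state. Indeed, a consumer's decision to buy is based on his type-$M$ subjective estimate $E_M(q\mid \phi)$, which is a function of the objective distribution $p$ (determined by $h$, $\mu$, and the supply formula (\ref{quality supply})) and of the price $\phi$ he observes --- but not directly of the unobserved state. Hence, aggregating across cognitive types, total demand at price $\phi$ is a single number, the same whether the price was generated by $\theta$ or by $\theta'$.

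Market clearing in each of the two states therefore forces supply to equal this common demand, so that
\[
1 - \pi^\star(\theta, \phi) \;=\; 1 - \pi^\star(\theta', \phi).
\]
Using the binding version of (\ref{active condition}), we have $\pi^\star(\theta,\phi) = (c - \phi)/S(\theta)$ (and analogously for $\theta'$). Because the equilibrium is interior, $\pi^\star(\theta,\phi) \in (0,1)$, so in particular $c - \phi > 0$. Dividing through by $c - \phi$ yields $S(\theta) = S(\theta')$, which contradicts the assumption that $S$ is one-to-one.

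There is no real obstacle here: the argument is essentially a one-line calculation once two conceptual points are nailed down. First, that demand at a given price is state-independent (because consumers only condition on $\phi$ through $p_M(q\mid\phi)$). Second, that the interior assumption guarantees $\phi \neq c$, so that the supply equation can be inverted to recover $S(\theta)$. The non-triviality of the proposition lies less in its proof than in the fact that imperfect inference by consumers does not spoil full revelation, which is ultimately a consequence of the fact that supply alone pins down the state through the injective map $S$.
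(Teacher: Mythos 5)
Your proof is correct and follows essentially the same route as the paper's: both rest on the observation that demand at a given price is state-independent (consumers condition only on $\phi$ through $p_M(q\mid\phi)$) while the supply cutoff $\pi^{\star}$ is tied to $S(\theta)$. The only difference is the direction of the final contradiction --- you use market clearing to force equal supply and then contradict the injectivity of $S$, whereas the paper uses injectivity to force unequal supply and then contradicts market clearing via downward-sloping demand; your explicit check that interiority gives $c-\phi>0$, so the supply relation can be inverted, is a point the paper leaves implicit.
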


\begin{proof}
Consider an interior equilibrium $h$. Assume, contrary to the claim, that $%
h(\theta )=h(\theta ^{\prime })=\phi $ for some pair of states $\theta
,\theta ^{\prime }$. This means that consumers cannot distinguish between
the two states. As a result, the add-on forecast $E_{M}(q\mid \phi )$ is the
same in both states for every consumer type $M$. Consequently, aggregate
demand is the same in both states. Turning to the supply side, by assumption 
$S(\theta )\neq S(\theta ^{\prime })$. Therefore, the L.H.S of (\ref{active
condition}) is different in the two states, such that $\pi ^{\star }(\theta
,\phi )\neq \pi ^{\star }(\theta ^{\prime },\phi )$. Thus, supply is
different in the two states while the price is the same. This can only be
consistent with market clearing if demand is flat around $\phi $ in $\theta $
and $\theta ^{\prime }$. But since demand is downward-sloping around
interior-equilibrium prices, we obtain a contradiction.\bigskip
\end{proof}

This result means that in any interior equilibrium, a consumer with rational
expectations would perfectly deduce the state from the equilibrium price,
and therefore have a correct assessment of the expected add-on according to (%
\ref{quality}).

We say that a distribution $p$ over $(\theta ,\phi ,q)$ is \textit{fully
revealing} if both conditional distributions $(p(\phi \mid \theta ))$ and\ $%
(p(\theta \mid \phi ))$ are degenerate. In particular, when $p$ is fully
revealing, we use $\theta ^{p}(\phi )$ to denote the unique value of $\theta 
$ for which $p(\theta \mid \phi )=1$. This enables us to simplify (\ref%
{coarse beliefs}) into%
\begin{equation}
p_{M}(q\mid \phi )=\sum_{\theta ^{\prime }}\mu (\theta ^{\prime }\mid \theta
_{M}^{\prime }=\theta _{M}^{p}(\phi ))p(q\mid \theta ^{\prime })
\label{simple coarse formula}
\end{equation}%
where%
\[
\mu (\theta ^{\prime }\mid \theta _{M}^{\prime }=\theta _{M})=\frac{\mu
(\theta ^{\prime })}{\sum_{\theta ^{\prime \prime }\mid \theta _{M}^{\prime
\prime }=\theta _{M}}\mu (\theta ^{\prime \prime })} 
\]

Thus, the consumer forms his net willingness to pay for the product as if he
learned the realization of the state variables in his model. At the same
time, he fails to draw any inference from the event in which he trades with
firms --- which, in the $\varepsilon \rightarrow 0$ limit, is that he has
the highest net willingness to pay in the market. That is, the consumer
essentially commits a \textquotedblleft winner's curse\textquotedblright\
fallacy.

\subsection{Rational Expectations Benchmark}

Our model includes Rational Expectations Equilibrium (REE) as a special
case, when the consumer's type is $M=\{1,...n\}$ --- i.e., he does not
ignore any exogenous variable. In this case, $p_{M}(q\mid \phi )\equiv
p(q\mid \theta ^{p}(\phi ))$. The reason is that by Proposition \ref{fulll
reveal}, $\phi $ is a deterministic, one-to-one function of $\theta $ in
interior equilibrium.

Full revelation also means that we can analyze equilibria separately for
each state. Let us derive the equilibrium for the \textit{%
homogenous-preference} limit $\varepsilon \rightarrow 0$, where demand is
flat because all consumers have a net willingness to pay of 
\begin{equation}
v^{\ast }-\frac{1+\pi ^{\star }(\theta ,h(\theta ))}{2}S(\theta )
\label{REE price}
\end{equation}%
This is the expression for the equilibrium price $h(\theta )$ in the $%
\varepsilon \rightarrow 0$ limit, in terms of the threshold $\pi ^{\star
}(\theta ,h(\theta ))$. By definition, this threshold satisfies (\ref{active
condition}) bindingly in interior equilibrium when the market price is $%
h(\theta )$. Combining these equations, we obtain%
\begin{equation}
\pi ^{\star }(\theta ,h(\theta ))=1-\frac{2\Delta }{S(\theta )}
\label{RE eq}
\end{equation}%
It follows that an interior equilibrium exists whenever $2\Delta <S^{\min }$%
. Plugging (\ref{RE eq}) into (\ref{REE price}), the equilibrium price and
expected add-on level in state $\theta $ are%
\begin{eqnarray}
h(\theta ) &=&v^{\ast }+\Delta -S(\theta )  \label{REE price and add-on} \\
\bar{q}(\theta ) &=&S(\theta )-\Delta  \nonumber
\end{eqnarray}%
The total expected payment in state $\theta $ is $h(\theta )+\bar{q}(\theta
)=v^{\ast }$, such that consumers end up paying their net willingness to pay
for the product.

The interior REE is socially inefficient. Since $\Delta >0$ and the add-on
is a pure transfer, the efficient outcome is to maximize production ---
i.e., all firms should be active ($\pi ^{\star }=0$) in every state.
Interior equilibria violate this requirement, by a standard
adverse-selection argument. The state $\theta $ is an aggregate statistic
that determines the potential for hidden transfers in the market, yet firms
differ in their ability to realize this potential. Even when consumers
perfectly infer $\theta $ from the market price, the equilibrium involves
adverse selection because active firms are those with high ability to
generate the exploitative hidden transfer. This lowers consumers'
willingness to pay for the product, which in turn lowers the equilibrium
price and disincentives low-$\pi $ firms from entering. The REE volume of
trade is thus below the efficient level.

\section{Analysis}

This section is devoted to characterizing interior equilibrium in our model.
We take the following for granted throughout the section. First, we make use
of the result (Proposition \ref{fulll reveal}) that interior equilibria are
fully revealing. Second, we focus on the $\varepsilon \rightarrow 0$ limit,
where all consumers' bare valuation of the product is $v^{\ast }$. In any
equilibrium $h$ of this limit case,

\begin{eqnarray}
h(\theta ) &=&v^{\ast }-\min_{M\in \mathcal{M}}\int_{q}p_{M}(q\mid \phi
=h(\theta ))q  \label{eq_max_WTP} \\
&=&v^{\ast }-\min_{M\in \mathcal{M}}\sum_{\theta ^{\prime }}\mu (\theta
^{\prime }\mid \theta _{M}^{\prime }=\theta _{M})\bar{q}(\theta ^{\prime }) 
\nonumber
\end{eqnarray}%
for every state $\theta $ (the second equality makes use of (\ref{simple
coarse formula})). That is, the equilibrium price in each state is equal to
the highest net willingness to pay among all cognitive consumer types. The
types that trade with firms in $\theta $ are the ones with the lowest (most
optimistic) estimate of the exploitative add-on.

We will often make use of a simple relation between equilibrium prices and
add-on levels in each state:%
\begin{equation}
h(\theta )=S(\theta )+c-2\bar{q}(\theta )  \label{price and addon}
\end{equation}%
This equation follows from (\ref{active condition}) and (\ref{quality}),
when we plug $\phi =h(\theta )$ and make use of the fact that (\ref{active
condition}) is binding at $\pi ^{\ast }(\theta ,h(\theta ))$ in an interior
equilibrium. Equation (\ref{price and addon}) allows us to go back and forth
between statements about add-ons and statements about prices.

Consider the following restriction on the model's primitives:%
\begin{equation}
S^{\max }-S^{\min }<2\Delta <S^{\min }  \label{condition theorem}
\end{equation}%
The proof of the following result, as well as some of the later ones,
appears in the Appendix.\bigskip

\begin{proposition}
\label{characterization} Suppose that condition (\ref{condition theorem})
holds. Then, there exists a unique interior equilibrium. The expected
equilibrium add-on level in each state is given by the functional equation:%
\begin{equation}
\bar{q}(\theta )=\frac{1}{2}\left[ S(\theta )-\Delta +\min_{M\in \mathcal{M}%
}\sum_{\theta ^{\prime }}\mu (\theta ^{\prime }\mid \theta _{M}^{\prime
}=\theta _{M})\bar{q}(\theta ^{\prime })\right]  \label{bellman}
\end{equation}%
In particular, $S^{\min }-\Delta \leq \bar{q}(\theta )\leq S^{\max }-\Delta $
for every $\theta $.\bigskip
\end{proposition}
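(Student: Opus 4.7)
The plan is to recognize that the two expressions for $h(\theta)$ already available in the text — the demand-side equation (\ref{eq_max_WTP}) and the supply-side identity (\ref{price and addon}) — can be combined to yield (\ref{bellman}) as an equilibrium condition on $\bar{q}$, and then to extract existence, uniqueness and the claimed bounds from a contraction argument applied to the resulting operator on $\mathbb{R}^{\Theta}$.

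First I would show that in any interior equilibrium the vector $(\bar{q}(\theta))_{\theta}$ satisfies (\ref{bellman}). Full revelation (Proposition \ref{fulll reveal}) validates the simplified belief formula (\ref{simple coarse formula}), so (\ref{eq_max_WTP}) expresses $h(\theta)$ in terms of $\bar{q}$, while (\ref{price and addon}) gives a second expression in terms of $\bar{q}(\theta)$ and $S(\theta)$. Equating the two and using $\Delta = v^{\ast}-c$ and solving for $\bar{q}(\theta)$ yields (\ref{bellman}).

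Next I would view the right-hand side of (\ref{bellman}) as a Bellman operator $T$ on $\mathbb{R}^{\Theta}$. Each inner sum is a convex combination of coordinates of $\bar{q}$ (since $\mu(\cdot\mid\theta'_M=\theta_M)$ is a probability distribution), and the $\min$ preserves Lipschitz bounds because $|\min_{M}a_M-\min_{M}b_M|\le\max_{M}|a_M-b_M|$, so $T$ is a contraction in the sup norm with modulus $\tfrac12$. Banach's theorem then supplies a unique fixed point $\bar{q}^{\ast}$. The stated bounds $S^{\min}-\Delta\le\bar{q}^{\ast}(\theta)\le S^{\max}-\Delta$ follow because the box defined by these inequalities is $T$-invariant: substituting either bound for $\bar{q}(\theta')$ on the right of (\ref{bellman}) and using $S^{\min}\le S(\theta)\le S^{\max}$ reproduces the bound at $\theta$.

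Finally, to convert $\bar{q}^{\ast}$ into an actual interior equilibrium, I would define $h(\theta)=S(\theta)+c-2\bar{q}^{\ast}(\theta)$ and check that the implied threshold $\pi^{\ast}(\theta,h(\theta))=2\bar{q}^{\ast}(\theta)/S(\theta)-1$ is strictly in $(0,1)$ for every $\theta$. This is where condition (\ref{condition theorem}) enters with both of its inequalities: $2\Delta<S^{\min}$ forces $\bar{q}^{\ast}(\theta)>S(\theta)/2$ (hence $\pi^{\ast}>0$), while $S^{\max}-S^{\min}<2\Delta$ forces $\bar{q}^{\ast}(\theta)<S(\theta)$ (hence $\pi^{\ast}<1$). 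Market clearing at $h$ is then straightforward in the $\varepsilon\to 0$ limit: the cognitive types attaining the $\min$ in (\ref{eq_max_WTP}) are indifferent and, since each has measure greater than one while $1-\pi^{\ast}<1$, easily absorb the residual supply. The main obstacle I anticipate is precisely this interiority verification: the contraction step is routine once the Bellman form is in hand, but turning the loose bounds on $\bar{q}^{\ast}$ into strict interiority of $\pi^{\ast}$ uniformly in $\theta$ is what uses both halves of (\ref{condition theorem}) and balances the cross-state spread $S^{\max}-S^{\min}$ against the gains-from-trade $\Delta$.
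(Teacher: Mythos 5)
Your proposal is correct and follows essentially the same route as the paper's proof: derive (\ref{bellman}) by combining (\ref{eq_max_WTP}) with (\ref{price and addon}), apply the contraction mapping theorem with modulus $\tfrac{1}{2}$ for existence and uniqueness, obtain the bounds $S^{\min}-\Delta\leq\bar{q}(\theta)\leq S^{\max}-\Delta$, and then use the two halves of (\ref{condition theorem}) exactly as you assign them to force $\tfrac{1}{2}S(\theta)<\bar{q}(\theta)<S(\theta)$, i.e.\ interiority. Your derivation of the bounds via invariance of the box under the Bellman operator is a cosmetic variant of the paper's max/min argument on the fixed point, and your explicit market-clearing check is a harmless addition.
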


Equation (\ref{bellman}) has the exact form of a \textit{Bellman equation},
where the \textquotedblleft discount factor\textquotedblright\ is $\frac{1}{2%
}$; each action corresponds to one of the models in $\mathcal{M}$; and the
\textquotedblleft transition probability\textquotedblright\ from $\theta $
to $\theta ^{\prime }$ induced by $M$ is $\mu (\theta ^{\prime }\mid \theta
_{M}^{\prime }=\theta _{M})$. Thus, in interior equilibrium,
\textquotedblleft the market\textquotedblright\ acts\ as if it tries to
solve a Markov Decision Problem of minimizing a discounted sum of add-on
charges, where the transition probabilities are derived from consumers'
coarse beliefs.

The Bellman equation itself is an immediate consequence of putting the
supply and demand equations (\ref{eq_max_WTP}) and (\ref{price and addon})
together. The proof of Proposition \ref{characterization} is mostly devoted
to establishing that the solution of (\ref{bellman}) defines an interior
equilibrium. The bounds on $\bar{q}(\theta )$ are the REE add-on levels in
the states having extremal values of $S$, as given by (\ref{REE price and
add-on}).

Unlike REE, the equilibrium equations for different states are $not$
mutually independent. The reason is that consumers are imperfectly
discerning, hence their willingness to pay in one state can reflect the
expected add-on in other states. The bounds on the equilibrium levels of
expected add-ons mean that their range is \textit{more compressed}, relative
to REE.

Note that condition (\ref{condition theorem}) ensures the existence of
interior equilibrium for any $\mathcal{M}$ and $\mu $. In applications that
assume specific $\mathcal{M}$ and $\mu $, the condition can be relaxed. Note
also that since our definition of equilibrium focuses entirely on the price
function $h$, uniqueness of interior equilibrium does not extend to
allocations. In particular, if two consumer cognitive types happen to have
the same add-on forecast, we are agnostic about who trade is distributed
between these two types.

\subsection{An Illustrative Example with\ Two State Variables}

This sub-section presents an example that demonstrates the characterization
of interior equilibrium given by (\ref{bellman}). The example also shows
that consumers' equilibrium payoffs can be non-monotone with respect to a
natural measure of their sophistication.

Let $n=2$, $\mu =U\{(0,0),(0,1),(1,0)\}$, and $S(0,0)<S(0,1)\approx S(1,0)$.
The set of cognitive types $\mathcal{M}$ consists of all subsets of $\{1,2\}$%
. Thus, type $\{1,2\}$ has rational expectations; type $\emptyset $ has
fully coarse beliefs because he cannot perceive any correlation between
price and add-on; whereas types $\{1\}$ and $\{2\}$ have partially coarse
beliefs because they omit one variable from their subjective models.

We now guess an interior equilibrium. Type $\{1,2\}$ buys the product in
state $(0,0)$ (in which his belief assigns probability one to this state);
type $\{1\}$ buys the product in state $(0,1)$ (in which his belief is
uniform over $(0,0)$ and $(0,1)$); and type $\{2\}$ buys the product in
state $(1,0)$ (in which his belief is uniform over $(0,0)$ and $(1,0)$).
Type $\emptyset $ never buys the product. Under this guess,\ (\ref{bellman})
is reduced to the following system of linear equations:%
\begin{eqnarray}
2\bar{q}(0,0) &=&S(0,0)-\Delta +\bar{q}(0,0)  \label{bellman example} \\
2\bar{q}(0,1) &=&S(0,1)-\Delta +\frac{1}{2}\bar{q}(0,1)+\frac{1}{2}\bar{q}%
(0,0)  \nonumber \\
2\bar{q}(1,0) &=&S(1,0)-\Delta +\frac{1}{2}\bar{q}(1,0)+\frac{1}{2}\bar{q}%
(0,0)  \nonumber
\end{eqnarray}%
The solution is%
\begin{eqnarray}
\bar{q}(0,0) &=&-\Delta +S(0,0)\medskip   \label{example solution} \\
\bar{q}(0,1) &=&-\Delta +\frac{2S(0,1)+S(0,0)}{3}  \nonumber \\
\bar{q}(1,0) &=&-\Delta +\frac{2S(1,0)+S(0,0)}{3}  \nonumber
\end{eqnarray}%
In order for the solution to define an interior equilibrium, $\frac{1}{2}%
S(\theta )<\bar{q}(\theta )<S(\theta )$, which holds whenever $2\Delta
<S(0,0)$. Note that this is the condition for interior REE, which is more
lenient than (\ref{condition theorem}).

The following table summarizes the subjective add-on estimates $E_{M}(q\mid
\theta )$ for every type $M$ (we use the abbreviated notation $q_{\theta
_{1}\theta _{2}}$ for $\bar{q}(\theta _{1},\theta _{2})$):

\[
\begin{array}{cccc}
Type\backslash State & 0,0 & 0,1 & 1,0 \\ 
\{1,2\} & q_{00} & q_{01} & q_{10} \\ 
\{1\} & \frac{1}{2}(q_{00}+q_{01}) & \frac{1}{2}(q_{00}+S_{01}) & q_{10} \\ 
\{2\} & \frac{1}{2}(q_{00}+q_{10}) & q_{01} & \frac{1}{2}(q_{00}+q_{10}) \\ 
\emptyset & \frac{1}{3}(q_{00}+q_{01}+q_{10}) & \frac{1}{3}%
(q_{00}+q_{01}+q_{10}) & \frac{1}{3}(q_{00}+q_{01}+q_{10})%
\end{array}%
\]%
Since $\bar{q}(0,0)<\bar{q}(0,1)\approx \bar{q}(1,0)$, this table confirms
our guess of the types with the lowest add-on estimate in each state.

In the interior equilibrium we derived, the partially coarse types $\{1\}$
and $\{2\}$ earn negative payoffs in the states in which they buy the
product, as their net willingness to pay exceeds the rational type's in
these states. In contrast, the fully coarse type $\emptyset $, who is
intuitively less sophisticated than the partially coarse types, enjoys a
\textquotedblleft loser's blessing\textquotedblright : He earns zero payoffs
because he never trades. Thus, the types who suffer a welfare loss are
sophisticated enough to infer from the observed price that one state
variable is favorable, but not sophisticated enough to understand that them
buying the product implies that the other state variable is unfavorable. As
a result, they underestimate the add-on and overpay for the product. This
kind of non-monotonicity in consumer sophistication has been observed in
previous works (most relatedly, by Ettinger and Jehiel (2011) and Eyster and
Piccione (2013)).

\subsection{Characterization Results}

In this sub-section we put Proposition \ref{characterization} to work. Our
first result examines how the interior equilibrium changes when we expand
the set of cognitive types $\mathcal{M}$ --- i.e., when consumers become
more diverse in terms of their subjective models. The \textquotedblleft
Bellman\textquotedblright\ characterization of interior equilibrium means
that expanding $\mathcal{M}$ is formally equivalent to expanding the set of
actions in a Markov Decision Problem (MDP). This equivalence enables us to
tap into standard results on solutions of MDPs and apply them to the present
context, where they have very different meaning.\bigskip

\begin{proposition}
\label{prop adding types}Suppose condition (\ref{condition theorem}) holds.
Then, adding a new type $M$ to $\mathcal{M}$ has the following effects on
the unique interior equilibrium:\medskip \newline
$(i)$ $\bar{q}(\theta )$ weakly decreases for every state $\theta $.\medskip 
\newline
$(ii)$ $h(\theta )$ weakly increases for every $\theta $.\medskip \newline
$(iii)$ Social surplus weakly increases in each state.\medskip \newline
$(iv)$ Individual consumers' net payoff weakly decreases in each state.
Moreover, if $\mathcal{M}$ includes a rational type, then consumers incur an
ex-ante welfare loss, which increases when $M$ is added to $\mathcal{M}$.
\end{proposition}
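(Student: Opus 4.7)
The plan is to leverage the Bellman-equation representation (\ref{bellman}) from Proposition~\ref{characterization}: expanding the set of cognitive types $\mathcal{M}$ amounts to enlarging the action set of the underlying Markov Decision Problem, so part (i) reduces to a standard monotonicity result for value functions, and parts (ii)--(iv) then follow from the linear identity (\ref{price and addon}).

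For part (i), let $T_{\mathcal{M}}$ denote the operator on $\mathbb{R}^{\Theta}$ defined by the right-hand side of (\ref{bellman}). It is a monotone contraction of modulus $\frac{1}{2}$ in the sup norm. Writing $\mathcal{M}' = \mathcal{M} \cup \{M\}$, the minimum in $T_{\mathcal{M}'}$ ranges over a weakly larger set, so $T_{\mathcal{M}'}(x) \leq T_{\mathcal{M}}(x)$ pointwise. Let $\bar{q}$ and $\bar{q}'$ be the respective interior-equilibrium add-on profiles; by Proposition~\ref{characterization} they are the unique fixed points of $T_{\mathcal{M}}$ and $T_{\mathcal{M}'}$ (interior existence being guaranteed by (\ref{condition theorem}) for any $\mathcal{M}$). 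Then $T_{\mathcal{M}'}(\bar{q}) \leq T_{\mathcal{M}}(\bar{q}) = \bar{q}$, and iterating the monotone contraction $T_{\mathcal{M}'}$ from $\bar{q}$ downward yields a weakly decreasing sequence converging to $\bar{q}'$. Hence $\bar{q}'(\theta) \leq \bar{q}(\theta)$ in every state.

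For parts (ii) and (iii), identity (\ref{price and addon}) gives $h(\theta) = S(\theta) + c - 2\bar{q}(\theta)$, so (i) immediately yields $h'(\theta) \geq h(\theta)$. The measure of active firms equals $1 - \pi^{\star}(\theta, h(\theta)) = 2(S(\theta) - \bar{q}(\theta))/S(\theta)$, which weakly rises state by state; since each transaction generates welfare $v^{\ast} - c = \Delta$ while the add-on is a pure transfer, per-state social surplus equals $\Delta \cdot (1 - \pi^{\star})$ and therefore weakly rises. For the first claim of (iv), a trading consumer's true expected net payoff in $\theta$ equals $v^{\ast} - h(\theta) - \bar{q}(\theta) = \Delta - S(\theta) + \bar{q}(\theta)$, which weakly decreases by (i).

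For the ex-ante claim in (iv), when $\mathcal{M}$ contains the rational type $M = \{1,\ldots,n\}$, the conditional $\mu(\theta' \mid \theta_M' = \theta_M)$ is degenerate at $\theta$ for that type, so the minimum in (\ref{bellman}) is at most $\bar{q}(\theta)$; rearranging then gives $\bar{q}(\theta) \leq S(\theta) - \Delta$. Hence the per-trade loss $L(\theta) := S(\theta) - \Delta - \bar{q}(\theta)$ is nonnegative, and total ex-ante consumer welfare loss equals $\sum_{\theta} \mu(\theta)(1 - \pi^{\star}(\theta, h(\theta))) L(\theta)$. By (i) and the formulas above, both $(1 - \pi^{\star})$ and $L(\theta)$ are nonnegative and weakly increase state by state when $M$ is added, so the sum weakly increases. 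The main obstacle is (i), which rests on standard MDP monotonicity applied to a somewhat non-standard ``MDP''; a subtlety to handle carefully is that the identity of the cognitively ``active'' type in a given state can shift when $M$ is added, but framing the payoff claims at the level of aggregate per-trade quantities rather than tracking individual consumer types sidesteps this.
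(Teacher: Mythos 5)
Your proposal is correct and follows essentially the same route as the paper: part (i) via monotonicity of the Bellman/MDP value in the action set (you simply spell out the monotone-contraction iteration that the paper cites as a standard fact), parts (ii)--(iii) via the identity $h(\theta)=S(\theta)+c-2\bar{q}(\theta)$ and the volume of trade $1-\pi^{\star}$, and part (iv) by combining the weakly larger per-trade loss $S(\theta)-\Delta-\bar{q}(\theta)\geq 0$ (your algebraic derivation from the rational type's presence matches the paper's willingness-to-pay argument) with the weakly larger measure of trading consumers.
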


\begin{proof}
By Proposition \ref{characterization}, $\bar{q}(\theta )$ is formally the
solution to a finite-state MDP of minimizing a discounted expected cost
function, where $\mathcal{M}$ is the set of feasible actions in this MDP.
Expanding the set of feasible actions weakly improves the value function at
each state, which implies $(i)$. Property $(ii)$ then immediately follows
from $(i)$ and equation (\ref{price and addon}).

To see why property $(iii)$ follows from $(i)$, note that by (\ref{quality}%
), $\bar{q}(\theta )$ decreases if and only if $\pi ^{\ast }(\theta )$
decreases. Therefore, the expansion of $\mathcal{M}$ leads to a weak
decrease in $\pi ^{\ast }(\theta )$ in each state $\theta $. This means that
there are more active firms --- and hence more trade --- in each state. As
we saw, in this model social welfare is pinned down by the volume of trade.

As to property $(iv)$, note that consumers who do not trade in a given state
earn zero payoffs. Consumers who do trade in a state $\theta $ earn a net
payoff of $v^{\ast }-h(\theta )-\bar{q}(\theta )$. Plugging (\ref{price and
addon}), this expression becomes $\Delta -S(\theta )+\bar{q}(\theta )$.
Since the expansion of $\mathcal{M}$ leads to a weak decrease in $\bar{q}%
(\theta )$, active consumers' net payoff in $\theta $ weakly decreases, too.

When $\mathcal{M}$ includes a rational type, the net payoff of any consumer
who trades in any state must be weakly negative, because the equilibrium
price is equal to this type's willingness to pay and therefore lies weakly
above the rational-expectations willingness to pay. As we saw above, the
volume of trade --- which is equal to the measure of consumers who trade ---
weakly increases in each state when we expand $\mathcal{M}$. Thus, not only
does the net payoff loss of each trading consumer weakly increases when we
expand $\mathcal{M}$, but there are also weakly more consumers who trade in
each state. This means that consumers' ex-ante welfare loss weakly goes
up.\bigskip
\end{proof}

Thus, expanding the set of cognitive types shifts payments from hidden
add-ons to salient prices --- i.e., add-ons decrease while prices increase.
The net effect on individual consumers' welfare is negative. The intuition
is that the expansion of $\mathcal{M}$ leads to an increase in demand, and
therefore higher equilibrium prices in each state. In response, the pool of
active firms becomes less adversely selective, as lower-$\pi $ types enter
the market thanks to the higher price. This in turn means that latent
exploitation shrinks in equilibrium. The net effect on trading consumers'
net payoff is negative, as they pay more in total.

Recall that in REE, $\bar{q}(\theta )=S(\theta )-\Delta $ for every $\theta $%
. Therefore, the ex-ante expected add-on in REE is $\bar{S}-\Delta $. The
following result draws on Proposition \ref{prop adding types} to show that
the ex-ante expected add-on in interior equilibrium is weakly below this REE
level. The result also shows that the lowest possible expected add-on given $%
\bar{S}$ is sustainable in equilibrium (for a suitable specification of
primitives).\bigskip

\begin{proposition}
\label{prop average S} In interior equilibrium, the expected add-on is in $[%
\frac{1}{2}\bar{S},\bar{S}-\Delta ]$. Moreover, the lower bound can be
approximated arbitrarily well by interior equilibrium for a suitable
selection of $\Theta ,S,\mu ,\mathcal{M}$ that is compatible with $\bar{S}$%
.\bigskip
\end{proposition}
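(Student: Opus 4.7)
Write $\bar{Q}:=\sum_{\theta}\mu(\theta)\bar{q}(\theta)$. I would establish the two bounds and then construct a family approximating the lower one.

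\emph{Upper bound $\bar{Q}\le \bar{S}-\Delta$.} I would average the Bellman equation (\ref{bellman}) against $\mu$. Fix any $M_{0}\in\mathcal{M}$: the $\min$ in (\ref{bellman}) is bounded above by the $M_{0}$-term $\sum_{\theta'}\mu(\theta'\mid \theta_{M_{0}}'=\theta_{M_{0}})\bar{q}(\theta')$. Multiplying by $\mu(\theta)$, summing over $\theta$, and swapping the order of summation, the elementary conditional/marginal identity $\sum_{\theta}\mu(\theta)\mu(\theta'\mid \theta_{M_{0}}'=\theta_{M_{0}})=\mu(\theta')$ collapses the averaged conditional expectation to $\bar{Q}$, yielding $\bar{Q}\le \tfrac{1}{2}(\bar{S}-\Delta+\bar{Q})$, hence $\bar{Q}\le \bar{S}-\Delta$.

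\emph{Lower bound $\bar{Q}\ge \tfrac{1}{2}\bar{S}$.} This is immediate from (\ref{quality}): since $\pi^{\star}\ge 0$ in any interior equilibrium, $\bar{q}(\theta)\ge \tfrac{1}{2}S(\theta)$, and averaging against $\mu$ gives the claim.

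\emph{Approximability.} For a prescribed $\bar{S}>2\Delta$, I would take $n=1$, $\Theta_{1}=\{1,2\}$, $\mathcal{M}=\{\emptyset,\{1\}\}$, $\mu(2)=p$, $S(1)=2\Delta+\delta$, and $S(2)=A$, tied by the compatibility constraint $(1-p)(2\Delta+\delta)+pA=\bar{S}$. Send $p,\delta\to 0$ and $A\to\infty$ along the curve $pA\to \bar{S}-2\Delta$. Under the guess that the minimizing cognitive type in (\ref{bellman}) is the rational type $\{1\}$ in state $1$ and the fully coarse type $\emptyset$ in state $2$, the Bellman equation decouples into $\bar{q}(1)=\Delta+\delta$ and $\bar{q}(2)=\tfrac{1}{2}(A-\Delta+\bar{Q})$. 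Solving the resulting linear system for $\bar{Q}$ and taking the limit yields $\bar{Q}\to \Delta+\tfrac{\bar{S}-2\Delta}{2}=\tfrac{1}{2}\bar{S}$.

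\emph{Main obstacle.} The two bounds are one-line arguments; the construction is where the work lies. Condition (\ref{condition theorem}) fails for large $A$, so Proposition \ref{characterization} does not directly certify that the candidate is an interior competitive equilibrium. I would therefore verify by direct substitution into (\ref{eq_max_WTP}) and (\ref{price and addon}) that the candidate $h(\theta)=S(\theta)+c-2\bar{q}(\theta)$ is an interior equilibrium along the entire limiting family: that $\tfrac{1}{2}S(\theta)<\bar{q}(\theta)<S(\theta)$ in both states (equivalently, $\pi^{\star}(\theta)\in(0,1)$), that $h(1)\neq h(2)$ so full revelation is preserved, and that the guessed minimizing type is correct in each state, i.e., $\bar{q}(1)<\bar{Q}<\bar{q}(2)$ throughout the limit.
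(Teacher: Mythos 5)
Your proof is correct, but it departs from the paper's argument in two places, both defensibly. For the upper bound, you bound the minimum in (\ref{bellman}) by an arbitrary fixed type $M_{0}$ and average, using the invariance identity $\sum_{\theta }\mu (\theta )\mu (\theta ^{\prime }\mid \theta _{M_{0}}^{\prime }=\theta _{M_{0}})=\mu (\theta ^{\prime })$; the paper instead first shows that for a singleton $\mathcal{M}$ the averaged Bellman equation holds with equality (so the expected add-on equals the REE level $\bar{S}-\Delta $) and then invokes the monotonicity of Proposition \ref{prop adding types} under expansion of $\mathcal{M}$. The two arguments rest on the same identity; yours is more self-contained and applies to any interior equilibrium without routing through Proposition \ref{prop adding types} (which is stated under condition (\ref{condition theorem})), while the paper's version additionally delivers the exact-equality benchmark for singleton $\mathcal{M}$. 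The lower bound via (\ref{quality}) is identical. The real divergence is the approximating construction: the paper uses $n+1$ states on $n$ binary coordinates with $S$ confined to (roughly) $[2\Delta ,4\Delta ]$, so that condition (\ref{condition theorem}) holds and the trading coarse type's posterior concentrates on the cheapest state as $n\rightarrow \infty $; you instead use two states, a fully coarse type trading in a vanishing-probability state whose $S$ explodes, which necessarily violates (\ref{condition theorem}). You correctly identify this as the obstacle and propose direct verification, which is legitimate (the paper itself notes that (\ref{condition theorem}) is only a sufficient condition that can be relaxed for specific $\mathcal{M},\mu $), and the verification does go through: the guess requires only $\bar{q}(1)<\bar{q}(2)$, one computes $\bar{q}(2)=\bigl(A-\Delta +(1-p)(\Delta +\delta )\bigr)/(2-p)$, interiority in state $2$ reduces to $pA+2\delta >2p(\Delta +\delta )$, and $h(1)=c-\delta \neq h(2)$, all of which hold along your limit. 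What your route buys is a smaller example and coverage of any $\bar{S}>2\Delta $ (the paper's construction needs $\bar{S}<4\Delta $ for $\alpha \in (0,1)$); what it costs is that Proposition \ref{characterization} no longer certifies the equilibrium for you, so the hand verification you outline is not optional but essential, and a complete write-up must include it.
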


The argument behind the proposition's first part is simple. When $\mathcal{M}
$ is a singleton, (\ref{bellman}) becomes a linear equation in $\bar{q}%
(\theta )$, for every $\theta $. This linearity, coupled with the
unbiasedness-on-average property of consumers' beliefs, implies that the
ex-ante expected add-on in interior equilibrium coincides with the REE
level. When we add cognitive types, Proposition \ref{prop adding types}
implies a drop in the expected add-on.

The lower bound on the expected add-on is attained in a large-$n$ variant on
the example of Section 3.1. In equilibrium, the trading consumer in every
state has an optimistic belief in the sense that he believes that the
expected add-on hits (exactly or approximately) its lowest possible level
(i.e., $S=S^{\min }$ and $\pi ^{\ast }=0$). The equilibrium outcome is
nearly efficient, as $\pi ^{\ast }\approx 0$ in every state, such that there
is no adverse selection, and the equilibrium headline price is close to $c$
in every state.

Thanks to (\ref{price and addon}), Proposition \ref{prop average S} has an
immediate implication for equilibrium headline prices.\bigskip

\begin{corollary}
The ex-ante expected price in interior equilibrium is weakly above its REE
level $v^{\ast }+\Delta -\bar{S}$. \bigskip
\end{corollary}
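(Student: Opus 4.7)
The plan is to take expectations of both sides of the price-add-on identity (\ref{price and addon}) with respect to $\mu$, and then substitute the upper bound on the ex-ante expected add-on from Proposition~\ref{prop average S}. Since everything of substance has already been done in Proposition~\ref{prop average S}, the corollary is essentially a one-line algebraic consequence; I expect no real obstacle beyond bookkeeping with the definition $\Delta = v^{\ast}-c$.

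Concretely, I would first apply $\sum_{\theta}\mu(\theta)(\cdot)$ to both sides of
\[
h(\theta)=S(\theta)+c-2\bar{q}(\theta),
\]
which yields $E_{\mu}[h(\theta)] = \bar{S} + c - 2\,E_{\mu}[\bar{q}(\theta)]$. Next, Proposition~\ref{prop average S} gives the upper bound $E_{\mu}[\bar{q}(\theta)] \le \bar{S}-\Delta$, so that
\[
E_{\mu}[h(\theta)] \ge \bar{S} + c - 2(\bar{S}-\Delta) = c - \bar{S} + 2\Delta.
\]
Finally, using $\Delta = v^{\ast}-c$, i.e. $c = v^{\ast}-\Delta$, the right-hand side becomes $v^{\ast}+\Delta-\bar{S}$, which is exactly the REE expected headline price computed from (\ref{REE price and add-on}). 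This establishes the corollary.

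The only point worth flagging is that the identity (\ref{price and addon}) holds state-by-state in any interior equilibrium, so taking $\mu$-expectations is legitimate; and the bound from Proposition~\ref{prop average S} applies to \emph{any} interior equilibrium under condition (\ref{condition theorem}), so the conclusion inherits exactly the same scope. Nothing more delicate is required.
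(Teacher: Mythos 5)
Your argument is correct and is exactly the route the paper intends: the corollary is stated as an immediate consequence of equation (\ref{price and addon}) combined with the upper bound $E_{\mu}[\bar{q}(\theta)]\leq \bar{S}-\Delta$ from Proposition \ref{prop average S}, and your algebra (including the substitution $c=v^{\ast}-\Delta$) checks out. Nothing further is needed.
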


Turning from average price components to their range, recall that by
Proposition \ref{characterization}, the range of expected add-ons in
interior equilibrium is compressed relative to REE. The following result
obtains a mirror image for prices, as long as there are rational consumers
in the market.\bigskip

\begin{proposition}
\label{proposition range}if $\mathcal{M}$ includes a rational type, then $%
2v^{\ast }-c-S^{\max }\leq h(\theta )\leq 2v^{\ast }-c-S^{\min }$ for every $%
\theta $ in the interior equilibrium. Moreover, the R.H.S inequality is
binding when $S(\theta )=S^{\min }$.\bigskip
\end{proposition}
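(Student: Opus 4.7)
My strategy is to translate the claimed bounds on $h(\theta)$ into equivalent bounds on $\bar{q}(\theta)$ using identity (\ref{price and addon}), and then derive the latter directly from the Bellman equation (\ref{bellman}). The presence of a rational type $M=\{1,\ldots,n\}$ in $\mathcal{M}$ is the key lever: for this $M$, the conditional probability $\mu(\theta'\mid\theta'_M=\theta_M)$ collapses to a point mass on $\theta$, so the rational type's contribution to the minimum in (\ref{bellman}) is exactly $\bar{q}(\theta)$.

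For the lower bound on $h(\theta)$, I would bound the minimum in (\ref{bellman}) above by the rational type's value $\bar{q}(\theta)$, which yields $\bar{q}(\theta) \leq S(\theta)-\Delta$ — i.e., the REE level is an upper bound on the equilibrium add-on. Substituting into (\ref{price and addon}) gives $h(\theta) \geq -S(\theta) + c + 2\Delta \geq -S^{\max} + c + 2\Delta = 2v^{\ast} - c - S^{\max}$, as desired. For the binding statement, observe that when $S(\theta) = S^{\min}$ this upper bound on $\bar{q}(\theta)$ meets the universal lower bound $\bar{q}(\theta) \geq S^{\min}-\Delta$ from Proposition~\ref{characterization}, forcing $\bar{q}(\theta) = S^{\min}-\Delta$ and hence $h(\theta) = 2v^{\ast}-c-S^{\min}$ via (\ref{price and addon}).

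The upper bound on $h(\theta)$ is the main obstacle, because the naive lower bound $\bar{q}(\theta)\geq S^{\min}-\Delta$ from Proposition~\ref{characterization} yields only $h(\theta) \leq S(\theta)+c+2\Delta - 2S^{\min}$, which falls short. To close the gap I would feed the naive bound once through the Bellman operator: since the conditional probabilities sum to one, for any $M$ the inner expectation $\sum_{\theta'}\mu(\theta'\mid\theta'_M=\theta_M)\bar{q}(\theta')$ is at least $S^{\min}-\Delta$, and (\ref{bellman}) then sharpens this to
\[
\bar{q}(\theta) \geq \tfrac{1}{2}\bigl[S(\theta)-\Delta + S^{\min}-\Delta\bigr] = \tfrac{S(\theta)+S^{\min}}{2}-\Delta.
\]
Plugging this refined bound into (\ref{price and addon}) makes the $S(\theta)$ terms cancel exactly, leaving $h(\theta) \leq c+2\Delta-S^{\min} = 2v^{\ast}-c-S^{\min}$. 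Once this one-step sharpening is seen, the remainder is routine algebraic substitution; in particular, neither condition (\ref{condition theorem}) nor the full force of the Bellman characterization beyond the bounds on $\bar{q}(\theta)$ is required.
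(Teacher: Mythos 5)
Your proof is correct, and in places it improves on the paper's own argument. For the lower bound the two routes are essentially the same --- both exploit the presence of the rational type, you by extracting the state-by-state inequality $\bar{q}(\theta)\leq S(\theta)-\Delta$ from the minimum in (\ref{bellman}) and translating through (\ref{price and addon}), the paper by noting directly that $h(\theta)\geq v^{\ast}-\bar{q}(\theta)$ and plugging in the global bound $\bar{q}(\theta)\leq S^{\max}-\Delta$ from Proposition \ref{characterization}; your intermediate bound is the sharper one, but the final conclusion is identical. Where you genuinely diverge is on the binding claim: the paper guesses $\bar{q}(\theta)=S^{\min}-\Delta$ at the minimizing state and then has to verify that this guess is consistent with the entire system of Bellman equations, whereas you simply squeeze $\bar{q}(\theta)$ between the rational-type upper bound $S(\theta)-\Delta=S^{\min}-\Delta$ and the universal lower bound $S^{\min}-\Delta$; this is shorter and dispenses with the consistency check. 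Finally, the paper's appendix proof never explicitly establishes the upper bound $h(\theta)\leq 2v^{\ast}-c-S^{\min}$; your one-step iteration of the Bellman operator supplies it cleanly, and, as your argument makes apparent, this half of the proposition does not actually use the rational type at all --- it holds for any $\mathcal{M}$. One small caveat on your closing remark: the bounds on $\bar{q}(\theta)$ that you invoke are themselves consequences of the Bellman characterization, so you are not quite dispensing with it, and condition (\ref{condition theorem}) is still doing work in the background by guaranteeing that the interior equilibrium you reason about exists and is the unique solution of (\ref{bellman}).
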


Thus, adding imperfectly discerning consumers to a market that contains
rational consumers widens equilibrium price fluctuations. To see why the
presence of rational consumers is required for this result, suppose $%
\mathcal{M}$ consists of a single type $M=\emptyset $. This type has fully
coarse beliefs, and therefore his willingness to pay is constant across
states. It follows that the equilibrium price is absolutely rigid. Moreover,
by (\ref{price and addon}), it coincides with the expected REE price. In
this case, equilibrium prices are obviously compressed relative to REE.

\section{Mutually Beneficial Add-Ons}

So far, we have assumed that latent add-ons are purely exploitative, namely
a transfer from consumers to firms. In many real-life contexts, however,
add-on features generate surplus for both parties. For instance, the add-on
can be a follow-up service which, due to compatibility issues, the consumer
can only get from whoever sold him the basic product. If demand for this
service is linearly downward-sloping, the optimal monopoly price for the
service will split the surplus equally between the consumer and the firm.

In this section, we\ present a variant on our model that covers such cases.
Assume that when a consumer buys from a type-$\pi $ firm in state $\theta $, 
\textit{each of them} obtains a latent payoff of $\pi S(\theta )$. We refer
to $q=\pi S(\theta )$ as the \textit{quality} that the consumer gets in this
case, and to $\bar{q}(\theta )$ (as defined by (\ref{quality})) as the 
\textit{average quality} in state $\theta $. As we will see, since
consumers' latent payoff is positive, interior equilibrium will require us
to assume that $\Delta =v^{\ast }-c<0$ --- namely, the basic product
generates a \textit{negative} surplus. All the other modeling assumptions
and definitions remain as in the basic model of Section 2. In particular,
the supply side behaves exactly as in the basic model, and interior
equilibrium continues to be fully revealing.

We now focus on the $\varepsilon \rightarrow 0$ limit, where demand is
nearly homogeneous. In any equilibrium $h$ of this limit case, 
\[
h(\theta )=v^{\star }+max_{M\in \mathcal{M}}\sum_{\theta ^{\prime }}\mu
(\theta ^{\prime }\mid \theta _{M}^{\prime }=\theta _{M})\bar{q}(\theta
^{\prime })
\]%
for every state $\theta $. Compare this expression with (\ref{eq_max_WTP}).
The equilibrium price in state $\theta $ is determined by the consumer type
with the \textit{highest} add-on estimate in that state (whereas in the
basic model, the type with the \textit{lowest} estimate determined the
price). Combining this equation for $h(\theta )$ with the supply-driven
equation (\ref{price and addon}), we obtain%
\begin{equation}
\bar{q}(\theta )=\frac{1}{2}\left[ S(\theta )-\Delta -max_{M\in \mathcal{M}%
}\sum_{\theta ^{\prime }}\mu (\theta ^{\prime }\mid \theta _{M}^{\prime
}=\theta _{M})\bar{q}(\theta ^{\prime })\right]   \label{quasi_bel}
\end{equation}

This equation is exactly the same as (\ref{bellman}), except for the minus
sign before the third term inside the brackets. In other words, it is like a
Bellman equation with a negative discount factor. The equation defines a
contraction mapping, and so it has a unique solution, pinning down $h(\theta
)$ and $\pi ^{\star }(\theta ,h(\theta ))$. To guarantee that the
equilibrium is indeed interior, we impose the following condition on the
primitives:%
\begin{equation}
-\frac{2}{3}\Delta <S^{min}<S^{max}<-\Delta  \label{condition beneficial}
\end{equation}

While it is tempting to think that (\ref{quasi_bel}) can be used to recover
all of the results from Section 3 (possibly with a change of sign), the next
example illustrates that this is not the case. Specifically, the example
shows that expanding the set of cognitive types need not have a uniform
effect on equilibrium add-on levels across states (unlike Proposition \ref%
{prop adding types}).

Let $n=1$, $\theta \in \{0,1\}$, and assume $\mu $ is uniform. Let $%
S(0)=k<1=S(1)$ and assume (\ref{condition beneficial}) holds. Suppose $%
\mathcal{M}$ consists of a single, \textquotedblleft fully
coarse\textquotedblright\ type $M=\emptyset $. This type's add-on estimate
is $(\bar{q}(0)+\bar{q}(1))/2$ in both states. The solution to (\ref%
{quasi_bel}) is $\bar{q}(0)=(5k-4d-1)/12$ and $\bar{q}(1)=(5-4d-k)/12$. Now
add a rational type to $\mathcal{M}$. We can guess and verify that in
equilibrium, the rational type buys the product in $\theta =1$ and the
coarse type buys the product in $\theta =0$. The solution to (\ref{quasi_bel}%
) is $\bar{q}(0)=(6k-5d-1)/15$ and $\bar{q}(1)=(1-d)/3$. Thus, as a result
of the expansion of $\mathcal{M}$, expected equilibrium quality decreases in 
$\theta =1$ and increases in $\theta =0$.

Recall that in the basic model, where add-ons are exploitative, the expected
equilibrium add-on level is below its REE level --- i.e., there is a shift
from latent to salient price components. The same holds in the present
variant, as long as there are rational consumers in the market.\bigskip

\begin{proposition}
\label{mutually_beneficial_addition} Suppose that $\mathcal{M}$ includes a
rational type. Then, in the interior equilibrium, for every state $\theta $, 
$\bar{q}(\theta )$ is weakly below its REE level and $h(\theta )$ is weakly
above its REE level.
\end{proposition}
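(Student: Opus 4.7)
The plan is to exploit the fixed-point form of (\ref{quasi_bel}) together with the presence of the rational type, in the spirit of Proposition \ref{prop adding types} but adapted to the negative-sign setting. First I would record the REE benchmark. When $\mathcal{M}$ consists solely of the rational type $\{1,\dots,n\}$, the fact that interior equilibria remain fully revealing (as noted at the start of Section 4) implies that the rational type's subjective expectation of $q$ conditional on $h(\theta)$ equals $\bar{q}(\theta)$ itself. Equation (\ref{quasi_bel}) then collapses to $\bar{q}(\theta) = \tfrac{1}{2}[S(\theta) - \Delta - \bar{q}(\theta)]$, which solves to $\bar{q}^{REE}(\theta) = \tfrac{1}{3}(S(\theta) - \Delta)$; equation (\ref{price and addon}) in turn pins down $h^{REE}(\theta) = S(\theta) + c - 2\bar{q}^{REE}(\theta)$.

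Next I would show that the interior-equilibrium average quality $\bar{q}^{\ast}$ under the general $\mathcal{M}$ satisfies $\bar{q}^{\ast}(\theta) \leq \bar{q}^{REE}(\theta)$ for every $\theta$. The key observation is that, by full revelation, the rational type's subjective expectation of $q$ conditional on $h(\theta)$ is again $\bar{q}^{\ast}(\theta)$, so the maximum inside the brackets of (\ref{quasi_bel}) is bounded below by the summand coming from the rational type:
\[
\max_{M \in \mathcal{M}} \sum_{\theta'} \mu(\theta' \mid \theta'_M = \theta_M)\,\bar{q}^{\ast}(\theta') \;\geq\; \bar{q}^{\ast}(\theta).
\]
Substituting this lower bound into (\ref{quasi_bel}) yields $\bar{q}^{\ast}(\theta) \leq \tfrac{1}{2}[S(\theta) - \Delta - \bar{q}^{\ast}(\theta)]$, and rearranging gives $\bar{q}^{\ast}(\theta) \leq \tfrac{1}{3}(S(\theta) - \Delta) = \bar{q}^{REE}(\theta)$, which is the first half of the claim.

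The price comparison is then immediate. Equation (\ref{price and addon}) continues to hold in the mutually beneficial variant (its derivation uses only the supply side, which is unchanged from Section 2), so $h^{\ast}(\theta) - h^{REE}(\theta) = -2[\bar{q}^{\ast}(\theta) - \bar{q}^{REE}(\theta)] \geq 0$ for every $\theta$, yielding the second half.

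I do not expect any genuine obstacle: the argument reduces to a one-line fixed-point inequality exploiting the self-referential nature of the rational type's estimate. The only point requiring some care is justifying that the rational type contributes exactly $\bar{q}^{\ast}(\theta)$ inside the max in (\ref{quasi_bel}), which rests on the full-revelation property of interior equilibria. Notably, unlike in Proposition \ref{prop adding types}, the argument does not invoke any monotonicity-in-$\mathcal{M}$ of the fixed point; the preceding example in Section 4 showed that such monotonicity can in fact fail under the negative discount factor. It is precisely the presence of a rational type that rescues a clean pointwise bound, since the rational summand calibrates the max directly against $\bar{q}^{\ast}(\theta)$ rather than against some state-dependent alternative.
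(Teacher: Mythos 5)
Your argument is correct and is essentially the paper's own proof: the paper rewrites (\ref{quasi_bel}) as $\tfrac{2}{3}\bar{q}(\theta)+\tfrac{1}{3}\max_{M}\sum_{\theta'}\mu(\theta'\mid\theta'_M=\theta_M)\bar{q}(\theta')=\tfrac{1}{3}(S(\theta)-\Delta)$, identifies the right-hand side as the REE level, and bounds the max from below by the rational type's summand $\bar{q}(\theta)$ — exactly your fixed-point inequality — before passing to prices via the supply-side relation. The only cosmetic difference is that the paper phrases the price comparison through the monotone link between $\bar{q}$ and $\pi^{\star}$ rather than citing (\ref{price and addon}) directly, which is the same content.
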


\begin{proof}
Denote $c(\theta )=\frac{1}{3}(S(\theta )-\Delta )$. It is possible to
rewrite (\ref{quasi_bel}) as 
\[
\frac{2}{3}\bar{q}(\theta )+\frac{1}{3}max_{M\in \mathcal{M}}\sum_{\theta
^{\prime }}\mu (\theta ^{\prime }\mid \theta _{M}^{\prime }=\theta _{M})\bar{%
q}(\theta ^{\prime })=c(\theta )
\]%
It follows that in REE, $\bar{q}(\theta )=c(\theta )$ in every state $\theta 
$. Since $\mathcal{M}$ includes a rational consumer type, $\bar{q}(\theta
)\leq c(\theta )$ in every state $\theta $. The weakly lower expected
quality implies that the fraction of active firms in the market is weakly
higher in each state than in REE (i.e., $\pi ^{\ast }$ is lower), and so $%
h(\theta )$ must be weakly higher in each state.\bigskip 
\end{proof}

This effect is the same as in the basic model, although it now requires us
to assume that $\mathcal{M}$ includes a rational type. The effect's welfare
implications, however, are very different from what we observed in the basic
model.\bigskip 

\begin{proposition}
\label{mutually_beneficial_welfare} Suppose that $\mathcal{M}$ includes a
rational type. Then, when $\mathcal{M}$ is expanded, equilibrium social
surplus weakly decreases.\bigskip
\end{proposition}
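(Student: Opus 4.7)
The plan is to show that state-by-state welfare weakly decreases and then aggregate. Writing $\pi^{\star}(\theta)=2\bar{q}(\theta)/S(\theta)-1$ and substituting into the per-state welfare $W(\theta)=(1-\pi^{\star}(\theta))(\Delta+2\bar{q}(\theta))$ gives
\[
W(\theta)=\frac{2(S(\theta)-\bar{q}(\theta))(\Delta+2\bar{q}(\theta))}{S(\theta)},
\]
a concave quadratic in $\bar{q}(\theta)$. A direct calculation yields
\[
W^{(1)}(\theta)-W^{(2)}(\theta)=\frac{2}{S(\theta)}\bigl(\bar{q}^{(1)}(\theta)-\bar{q}^{(2)}(\theta)\bigr)\bigl(2S(\theta)-\Delta-2(\bar{q}^{(1)}(\theta)+\bar{q}^{(2)}(\theta))\bigr),
\]
where superscripts refer to the equilibria under $\mathcal{M}$ and the expanded $\mathcal{M}'$. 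The task reduces to signing both factors.

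For the second factor, Proposition~\ref{mutually_beneficial_addition} gives $\bar{q}^{(i)}(\theta)\leq c(\theta):=\frac{1}{3}(S(\theta)-\Delta)$, so $\bar{q}^{(1)}(\theta)+\bar{q}^{(2)}(\theta)\leq\frac{2}{3}(S(\theta)-\Delta)$ and
\[
2S(\theta)-\Delta-2\bigl(\bar{q}^{(1)}(\theta)+\bar{q}^{(2)}(\theta)\bigr)\geq\frac{1}{3}\bigl(2S(\theta)+\Delta\bigr).
\]
Condition (\ref{condition beneficial}) gives $-\Delta<\frac{3}{2}S^{\min}<2S(\theta)$, so $2S(\theta)+\Delta>0$ and the second factor is strictly positive at every state.

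It remains to show $\bar{q}^{(2)}(\theta)\leq\bar{q}^{(1)}(\theta)$ for every $\theta$, which is the main technical step. The operator $T_{\mathcal{M}}(x)(\theta):=\frac{1}{2}(S(\theta)-\Delta)-\frac{1}{2}\max_{M\in\mathcal{M}}\sum_{\theta'}\mu(\theta'\mid\theta_{M}'=\theta_{M})x(\theta')$ is anti-monotone, but $T_{\mathcal{M}}^{2}$ is monotone and a $\frac{1}{4}$-contraction. Enlarging $\mathcal{M}$ gives $T_{\mathcal{M}'}(x)\leq T_{\mathcal{M}}(x)$ pointwise, and starting iteration from $\bar{q}^{(1)}=T_{\mathcal{M}}(\bar{q}^{(1)})$ immediately yields $T_{\mathcal{M}'}(\bar{q}^{(1)})\leq\bar{q}^{(1)}$. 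Using the presence of the rational type in both $\mathcal{M}$ and $\mathcal{M}'$, I would then verify $T_{\mathcal{M}'}^{2}(\bar{q}^{(1)})\leq\bar{q}^{(1)}$; iterating the monotone contraction $T_{\mathcal{M}'}^{2}$ and passing to the limit gives $\bar{q}^{(2)}\leq\bar{q}^{(1)}$. Combined with the previous paragraph, $W^{(2)}(\theta)\leq W^{(1)}(\theta)$ for every $\theta$, and integrating against $\mu$ proves the proposition. The obstacle is the inequality $T_{\mathcal{M}'}^{2}(\bar{q}^{(1)})\leq\bar{q}^{(1)}$: its one-dimensional analogue is immediate from $T^{2}(x)-x=(1-b^{2})(x^{\star}-x)$ for an affine contraction $x\mapsto a-bx$, but the multidimensional case requires careful use of the fact that the rational type lies in $\mathcal{M}\cap\mathcal{M}'$, so that the "loss" from replacing $\bar{q}^{(1)}$ with $T_{\mathcal{M}'}(\bar{q}^{(1)})$ inside the max is dominated by the "gain" from enlarging the maximizing set.
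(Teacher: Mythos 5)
Your reduction is sound as far as it goes: the per-state welfare formula $W(\theta)=(1-\pi^{\star}(\theta))(\Delta+2\bar{q}(\theta))$ is correct, the factorization of $W^{(1)}(\theta)-W^{(2)}(\theta)$ checks out, and signing the second factor via Proposition \ref{mutually_beneficial_addition} together with condition (\ref{condition beneficial}) is exactly right --- it is the algebraic counterpart of the paper's observation that, because the marginal firm earns zero while the consumer loses on it, every firm type below the REE cutoff is socially harmful to trade with, so $W$ is increasing in $\bar{q}$ on the range $\bar{q}(\theta)\leq c(\theta)$. The entire proof therefore hinges on the step you yourself flag: $\bar{q}^{(2)}(\theta)\leq\bar{q}^{(1)}(\theta)$ for every $\theta$, equivalently that expanding $\mathcal{M}$ weakly raises $h(\theta)$ state by state. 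This is the same step the paper's proof leans on (it attributes it to Proposition \ref{mutually_beneficial_addition}, which strictly speaking only compares each equilibrium to REE), so you have correctly located the crux --- but you have not closed it, and your sketched route does not close it.

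Concretely, the $T^{2}$ plan fails as stated. From $T_{\mathcal{M}'}(\bar{q}^{(1)})\leq\bar{q}^{(1)}$ and the anti-monotonicity of $T_{\mathcal{M}'}$ you get $T_{\mathcal{M}'}^{2}(\bar{q}^{(1)})\geq T_{\mathcal{M}'}(\bar{q}^{(1)})$, i.e., the second iterate moves back \emph{up}; and bounding it from above by comparing the inner maximum under $\mathcal{M}'$ at $T_{\mathcal{M}'}(\bar{q}^{(1)})$ with the maximum under $\mathcal{M}$ at $\bar{q}^{(1)}$ only yields $T_{\mathcal{M}'}^{2}(\bar{q}^{(1)})(\theta)-\bar{q}^{(1)}(\theta)\leq\frac{1}{2}\max_{\theta'}\bigl(\bar{q}^{(1)}(\theta')-T_{\mathcal{M}'}(\bar{q}^{(1)})(\theta')\bigr)$, whose right-hand side is nonnegative, so nothing pins the second iterate below $\bar{q}^{(1)}$. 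The danger is genuine rather than technical bookkeeping: the paper's own two-state example in Section 4 shows that when the smaller set lacks a rational type, $\bar{q}$ moves in \emph{opposite} directions in different states upon expansion, so any valid argument must exploit the rational type for more than the bound $\bar{q}(\theta)\leq c(\theta)$; you assert that it should be used but do not say how. Until the pointwise monotonicity lemma is actually established, the proof is incomplete.
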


When the add-on is mutually beneficial, the competitive market is \textit{%
positively selective} --- i.e., the firm types that enter the market are the
ones that create more latent surplus for consumers. (By comparison, the
market in our basic model exhibits adverse selection.) In REE, consumers
earn zero net payoffs on average, which means that trading with the marginal
firm type $\pi ^{\ast }$ is harmful for consumers. Since this firm type is
indifferent to market entry, trading with it is socially harmful. In other
words, the REE volume of trade is excessive from the perspective of social
welfare. When $\mathcal{M}$ includes a rational type and we expand this set,
even lower-quality firms enter the market, which exacerbates this social
harm.

In summary, competitive markets with diversely discerning markets function
differently when latent product features are mutually beneficial and when
they are exploitative. Technically, the difference finds expression in the
sign of the Bellman-like equation that characterizes interior equilibrium.
Economically, the difference is that markets with mutually beneficial latent
add-ons are positively selective, whereas markets with exploitative latent
add-ons are adversely selective.

\section{Beliefs Based on Subjective Causal Models}

In this section we revert to the exploitative-add-on version of the model,
and extend the consumer belief-formation model presented in Section 2. The
more general model, based on Spiegler (2016), assumes that every cognitive
type represents a \textit{subjective causal model} that postulates
qualitative causal links among several variables: The observed price $\phi $%
, the add-on $q$, and some of the state variables $\theta _{1},...,\theta
_{n}$. This extension will enable us to capture varieties of partially
discerning consumers beyond the basic model's scope. In turn, this will give
rise to novel supply and demand responses to external shocks. As we will
see, \textit{all} the results in previous sections will extend to this more
general model.

A causal model is a \textit{directed acyclic graph} (DAG) $G=(N,R)$, where $%
N $ is a set of nodes and $R$ is a set of directed links. Each node in $N$
represents a variable, and a link in $R$ represents a perceived causal
relation between two variables. Let $\mathcal{G}$ be the set of subjective
causal models in the consumer population. This is the analogue of $\mathcal{M%
}$ in the basic model. As before, we assume that the measure of consumers of
each of these types is greater than $1$.

We impose the following restrictions on every $G\in \mathcal{G}$. First, it
must include nodes that represent $\phi $ and $q$ (because the consumer
tries to infer the add-on from the headline price). Second, it does not have
links of the form $\phi \rightarrow \theta _{i}$ or $q\rightarrow \theta
_{i} $. This restriction means that consumers realize that state variables
are exogenous whereas price and add-on are endogenous.

It is sometimes helpful to label causal-model variables as $(x_{i})_{i\in N}$%
. Abusing notation, let $R(i)$ be the set of nodes that send a directed link
into $i$. A node $i$ is $ancestral$ if $R(i)=\emptyset $. When the objective
joint distribution over all variables is $p$, a consumer whose subjective
DAG is $G=(N,R)$ forms the following subjective probabilistic belief over
the variables in his model:%
\begin{equation}
p_{G}(x_{N})=\dprod\limits_{i\in N}p(x_{i}\mid x_{R(i)})  \label{factor}
\end{equation}%
This is a standard Bayesian-network factorization formula (see Pearl (2009)
and Spiegler (2016)).

Our analysis will focus on the following subclass of DAGs.\bigskip

\begin{definition}
A DAG $G=(N,R)$ is perfect if, for every triple of nodes $i,j,k\in N$, $%
i,j\in R(k)$ implies $i\in R(j)$ or $j\in R(i)$.\bigskip
\end{definition}

\noindent In a perfect DAG, the parents of every node form a clique. The
basic model of Section 2 is a special case of the perfect-DAG formalism. The
set $M$ is a subset of the nodes that represent $\theta $. All the nodes in $%
M$ are mutually linked. In addition, $R(\phi )=R(q)=M$.

There are two motivations for adopting the perfect-DAG formalism. First,
perfect DAGs subsume earlier equilibrium market models with non-rational
expectations as special cases (including the basic model of Section 2),
while making room for new ones. Second, perfect DAGs satisfy the
unbiasedness-on-average property that we observed in Section 2.\bigskip

\begin{remark}
\label{perfect marginal} Suppose $G$ is a perfect DAG. Then, for every $p$
that arises from an interior equilibrium $h$,%
\begin{equation}
\sum_{\theta }\mu (\theta )p_{G}(q\mid h(\theta ))\equiv \sum_{\phi }p(\phi
)p_{G}(q\mid \phi )\equiv p(q)  \label{no distortion of average quality}
\end{equation}
\end{remark}

\noindent The left-hand identity arises from $h$ being a one-to-one function
of $\theta $ in interior equilibrium. For the right-hand identity, see
Spiegler (2020a,b).

To illustrate the perfect-DAG formalism, let $G_{ch}:\phi \leftarrow \theta
_{1}\rightarrow \theta _{2}\rightarrow q$. This DAG represents a causal
model that postulates $\theta _{1}$ as the sole direct cause of $\phi $ and $%
\theta _{2}$, and $\theta _{2}$ as the sole direct cause of $q$. It captures
consumers who mistakenly think that different external factors affect the
product's salient and latent components, whereas in reality both components
are jointly determined by all state variables. This DAG induces the
subjective belief%
\[
p_{G_{ch}}(\theta _{1},\theta _{2},q,\phi )=p(\theta _{1})p(\theta _{2}\mid
\theta _{1})p(\phi \mid \theta _{1})p(q\mid \theta _{2}) 
\]%
which in turn yields the conditional belief%
\[
p_{G}(q\mid \phi )=\sum_{\theta _{1},\theta _{2}}p(\theta _{1}\mid \phi
)p(\theta _{2}\mid \theta _{1})p(q\mid \theta _{2}) 
\]%
Because $\mu $ has full support over $\theta $, this expression is
well-defined.

\subsection{Generalizing the Bellman\ Equation}

We now present a lemma that provides a convenient characterization of the
conditional belief $p_{G}(q\mid \phi =h(\theta ))$ when $G$ is a perfect
DAG. In what follows, we refer to a system of conditional probabilities $%
\beta =(\beta (\theta ^{\prime }\mid \theta ))_{\theta ,\theta ^{\prime }\in
\Theta }$ as a \textit{transition matrix}. Recall that $\theta ^{p}(\phi )$
is the state $\theta $ that generates the price $\phi $ in a fully revealing 
$p$.\bigskip

\begin{lemma}
\label{lemma beta}Fix a distribution $\mu $ and a perfect DAG $G=(N,R)$.
Then, there exists a unique transition matrix $\beta $ satisfying the
following: For every fully revealing distribution $p$ over $(\theta ,\phi
,q) $ whose marginal over $\theta $ is $\mu $, and for every price $\phi $
in the support of $p$,%
\begin{equation}
p_{G}(q\mid \phi )=\sum_{\theta ^{\prime }}\beta (\theta ^{\prime }\mid
\theta ^{p}(\phi ))p(q\mid \theta ^{\prime })  \label{beta representation}
\end{equation}%
Moreover, $\mu $ is an invariant distribution of $\beta _{G}$.\bigskip
\end{lemma}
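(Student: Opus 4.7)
The claim decomposes into three assertions: existence of $\beta$, uniqueness of $\beta$, and invariance of $\mu$ under $\beta$. I will handle them in the order uniqueness, existence, invariance, since uniqueness and invariance share a common identification trick while existence is the main technical step that leans on the perfect-DAG hypothesis. For \emph{uniqueness}, suppose $\beta$ and $\beta'$ both satisfy (\ref{beta representation}). Fix any bijection $h:\Theta\to\mathbb{R}_{++}$ and pairwise distinct reals $\{q_{\theta'}\}_{\theta'\in\Theta}$, and define $p$ by $p(\theta)=\mu(\theta)$, $p(\phi\mid\theta)=\mathbf{1}\{\phi=h(\theta)\}$ and $p(q\mid\theta')=\delta_{q_{\theta'}}$. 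This $p$ is fully revealing with marginal $\mu$ over $\theta$, and $\theta^{p}(h(\theta))=\theta$. Evaluating (\ref{beta representation}) at $(\phi,q)=(h(\theta),q_{\theta'})$ isolates $\beta(\theta'\mid\theta)=p_{G}(q_{\theta'}\mid h(\theta))=\beta'(\theta'\mid\theta)$, so $\beta=\beta'$.

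For \emph{existence}, I apply the Bayesian-network factorization (\ref{factor}) to $p_{G}$ and marginalize out the state-variable nodes of $G$. Let $I\subseteq\{1,\ldots,n\}$ index those nodes. By the restriction that no arrow in $G$ points from $\phi$ or $q$ into a state variable, the factor at each state-variable node is a $\mu$-conditional. Full revelation forces $p(\phi\mid x_{R(\phi)})$ to collapse to a $\mu$-indicator pinning the coordinates of $\theta_{I}$ in $R(\phi)$ to those of $\theta^{p}(\phi)$, while $p(q\mid x_{R(q)})$ equals either $p(q\mid\theta^{p}(\phi))$ when $\phi\in R(q)$ (since conditioning on a fully revealing $\phi$ determines the entire state) or, via total probability, $\sum_{\theta'}\mu(\theta'\mid x_{R(q)\cap I})p(q\mid\theta')$ when $\phi\notin R(q)$. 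Marginalizing out the remaining state-variable coordinates and dividing by $p_{G}(\phi)$ yields $p_{G}(q\mid\phi)=\sum_{\theta'}\beta(\theta'\mid\theta^{p}(\phi))p(q\mid\theta')$ with nonnegative coefficients that depend only on $\mu$ and on the parent structure of $G$; stochasticity in $\theta'$ follows by applying the representation to a $p$ whose conditionals $p(q\mid\theta')$ are a common unit mass, in which case $p_{G}(\cdot\mid\phi)$ is itself that unit mass. The main obstacle is the combinatorial bookkeeping of how the parent sets $R(\phi)$, $R(q)$, and the various $R(i)$ for $i\in I$ overlap. This is where the perfect-DAG hypothesis earns its keep, because it guarantees that every parent set is a clique, which lets the intermediate sums be reorganized into $\mu$-consistent conditional marginals.

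Finally, \emph{invariance} of $\mu$ under $\beta$ is a short deduction from Remark \ref{perfect marginal}, whose derivation only uses that $p$ is fully revealing with marginal $\mu$ over $\theta$. That remark, the identity $p(q)=\sum_{\theta'}\mu(\theta')p(q\mid\theta')$, and the representation (\ref{beta representation}) together give
$$\sum_{\theta'}\left[\sum_{\theta}\mu(\theta)\beta(\theta'\mid\theta)-\mu(\theta')\right]p(q\mid\theta')=0.$$
Running the Dirac-mass construction from the uniqueness step extracts $\sum_{\theta}\mu(\theta)\beta(\theta'\mid\theta)=\mu(\theta')$ for every $\theta'$, which is exactly the statement that $\mu$ is invariant under $\beta$.
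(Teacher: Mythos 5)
Your uniqueness and invariance arguments are sound --- indeed the Dirac-mass identification you use to extract $\beta (\theta ^{\prime }\mid \theta )$ coordinate by coordinate is more explicit than the paper's treatment, which establishes uniqueness only implicitly by exhibiting a formula pinned down by $G$ and $\mu $; and your derivation of invariance from Remark \ref{perfect marginal} is exactly the paper's route. The problem is the existence step, which is the real content of the lemma and which you have not actually carried out. Your plan --- factorize via (\ref{factor}), marginalize out the state-variable nodes, and trust that perfection ``lets the intermediate sums be reorganized'' --- stops precisely where the work begins. Two concrete issues. First, $p(\phi \mid x_{R(\phi )})$ does \emph{not} collapse to an indicator under full revelation: full revelation makes $\phi $ a deterministic function of the \emph{entire} state, so $\phi $ conditional on only the coordinates in $R(\phi )$ is in general still random; what is degenerate is the reverse conditional $p(\theta _{R(\phi )}\mid \phi )$, and passing from one to the other requires dividing the factorized joint by $p_{G}(\phi )$, which is exactly the bookkeeping you defer. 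Second, and more importantly, you never establish that after this marginalization the weight on each $\theta ^{\prime }$ depends on $\phi $ only through $\theta ^{p}(\phi )$ and otherwise only on $\mu $ and $G$ --- which is the assertion of the lemma.

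The paper's proof resolves both issues with a step you are missing: because $G$ is perfect, it is equivalent to a DAG in which $\phi $ is an \emph{ancestral} node (in the sense that the two DAGs induce the same $p_{G}$; see Spiegler (2020a,b)), so one may assume $R(\phi )=\emptyset $ without loss of generality. With $\phi $ ancestral, a clean three-way case analysis finishes the job: if $\phi \rightarrow q$ directly, perfection makes $(\phi ,q)$ a clique whose joint marginal is undistorted, giving the identity matrix $\beta (\theta \mid \theta )=1$; if there is no path from $q$ to $\phi $, then $q$ and $\phi $ are independent under $p_{G}$ and $\beta (\theta ^{\prime }\mid \theta )=\mu (\theta ^{\prime })$; and in the intermediate case the only factor involving $\phi $ is the posterior $p(\theta _{C}\mid \phi )$ over the children $C$ of $\phi $, which full revelation makes degenerate at $\theta _{C}^{p}(\phi )$ --- this is what ties the whole expression to $\theta ^{p}(\phi )$ and yields the explicit formula for $\beta $ in terms of $\mu $ and $G$. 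You should either import this ancestral-reorientation property of perfect DAGs or supply a substitute for it; as written, the appeal to ``combinatorial bookkeeping'' is an acknowledgment of the gap rather than a proof.
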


Thus, given $\mu $ and $G$, we have a simple representation of the
consumer's belief over the add-on conditional on the market price.\footnote{%
This representation is somewhat reminiscent of a model of misperception of
correlations by Ellis and Piccione (2017).} Instead of correctly inferring
the add-on distribution (\ref{quality supply}) in the state revealed by the
market price, the consumer effectively calculates a weighted average of the
add-on distributions associated with various \textquotedblleft
virtual\textquotedblright\ states; the weights on virtual states may vary
with the actual state. This representation is made possible by the property
that $p$ is fully revealing, such that there is a one-to-one mapping between
prices and states.

In the basic model of Section 2, $\beta (\theta ^{\prime }\mid \theta )=\mu
(\theta ^{\prime }\mid \theta _{M}^{\prime }=\theta _{M})$. For the DAG $%
G_{ch}:\phi \leftarrow \theta _{1}\rightarrow \theta _{2}\rightarrow q$
introduced above, 
\[
\beta (\theta _{1}^{\prime },\theta _{2}^{\prime }\mid \theta _{1},\theta
_{2})\equiv \mu (\theta _{2}^{\prime }\mid \theta _{1})\mu (\theta
_{1}^{\prime }\mid \theta _{2}^{\prime })
\]%
To illustrate this formula, let $n=2$, $\theta _{1},\theta _{2}\in \{0,1\}$,
and $\mu =U\{(0,0),(1,0),(0,1)\}$. Then, $\beta (0,0\mid 0,\cdot )=\beta
(1,0\mid 0,\cdot )=0.25$; $\beta (0,1\mid 0,\cdot )=0.5$; and $\beta
(0,0\mid 1,0)=\beta (1,0\mid 1,0)=0.5$. Observe that the transition matrix
assigns positive weight to $\theta _{1}^{\prime }\neq \theta _{1}$, even
though the consumer correctly infers $\theta _{1}$ from $\phi $. Moreover, $%
\beta (0,0\mid 1,0)>\beta (0,0\mid 0,\cdot )$.

Although the representation (\ref{beta representation}) is convenient,
treating it as a primitive would be inappropriate. First, $\beta $ is often
hard to interpret, whereas its DAG-based foundation is interpretable.
Second, recall that (\ref{beta representation}) takes $\mu $ as fixed. In
the absence of a deeper foundation for $\beta $, we have no guide for how to
modify it when $\mu $ changes.

A fully connected DAG (i.e., one in which every pair of nodes is linked)
that includes all $\theta $ variables induces rational expectations, because
in this case (\ref{factor}) becomes the standard chain rule for probability
distributions over $(\theta ,\phi ,q)$. However, this is not the only class
of perfect DAGs that are guaranteed to induce correct equilibrium beliefs,
because in equilibrium, $\phi $ is a deterministic function of $\theta $.
When a DAG $G$ does not exclude any of the $\theta $ variables, and every
pair of nodes is linked (except possibly $(\phi ,q)$), then it represents a
rational consumer. Likewise, a perfect DAG in which $\phi $ and $q$ are
directly linked induces rational expectations. The transition matrix that
represents such consumers is the unit matrix, $\beta (\theta \mid \theta )=1$
for all $\theta $.

Proposition \ref{characterization} extends to the present belief-formation
model whenever $\mathcal{G}$ is a collection of perfect DAGs. The
Bellman-like equation (\ref{bellman}) is modified into%
\begin{equation}
\bar{q}(\theta )=\frac{1}{2}\left[ S(\theta )-\Delta +\min_{G\in \mathcal{G}%
}\sum_{\theta ^{\prime }}\beta _{G}(\theta ^{\prime }\mid \theta )\bar{q}%
(\theta ^{\prime })\right]   \label{bellman dag}
\end{equation}%
where $\beta _{G}$ is the transition matrix that represents the perfect DAG $%
G$. Condition (\ref{condition theorem}) continues to ensure existence and
uniqueness of interior equilibrium. All the other results in Section 3
extend as well. The mutually beneficial add-on variant of Section 4 is
extended in the same manner. The quasi-Bellman equation that characterizes
interior equilibrium is the same as (\ref{bellman dag}), except that the
last term in the squared brackets is preceded by a minus sign (and the
condition for interior equilibrium is (\ref{condition beneficial})).

\subsection{The Two-State-Variables Example Revisited}

To illustrate the use of (\ref{bellman dag}) to characterize interior
equilibrium in the DAG-based extension, revisit the example of Section 3.1,
where $n=2$, $\theta _{1},\theta _{2}\in \{0,1\}$, $\mu
=U\{(0,0),(0,1),(1,0)\}$, and $S(0,0)<S(1,0)\approx S(0,1)$.

The set of cognitive types $\mathcal{G}$ consists of a rational type, and
the two chain DAGs $G_{1}:\phi \leftarrow \theta _{1}\rightarrow \theta
_{2}\rightarrow q$ and $G_{2}:\phi \leftarrow \theta _{2}\rightarrow \theta
_{1}\rightarrow q$. We presented the transition matrix that represents $%
G_{1} $ in the previous sub-section. The matrix that represents $G_{2}$ is: $%
\beta (0,0\mid \cdot ,0)=\beta (0,1\mid \cdot ,0)=0.25$, $\beta (1,0\mid
\cdot ,0)=0.5$, and $\beta (0,0\mid \cdot ,1)=\beta (0,1\mid \cdot ,1)=0.5$.
Note that $\beta (0,0\mid \cdot ,1)>\beta (0,0\mid \cdot ,0)$. Thus, each of
the chain-DAG types draws an optimistic inference about the add-on when the
state variable he directly infers from the price takes the \textquotedblleft
bad\textquotedblright\ value $1$, rather than the \textquotedblleft
good\textquotedblright\ value $0$.

We now guess an equilibrium, and later verify that our guess is indeed an
equilibrium. As before, the guess-and-verify method is valid because there
is at most one interior equilibrium. Suppose the rational type buys the
product in state $(0,0)$; type $G_{1}$ buys the product in state $(1,0)$;
and type $G_{2}$ buys the product in state $(0,1)$. Under this guess, (\ref%
{bellman dag}) takes the exact same form as (\ref{bellman example}), leading
to the same solution (\ref{example solution}) for $\bar{q}(\theta )$. Let us
verify that the type who buys in each state indeed has the lowest add-on
estimate. The following table presents expressions for each type's estimate
in each state (we use the abbreviated notation $q_{\theta _{1}\theta _{2}}$
for $\bar{q}(\theta )$):

\[
\begin{array}{cccc}
Type\backslash State & 0,0 & 0,1 & 1,0 \\ 
rational & q_{00} & q_{01} & q_{10} \\ 
G_{1} & \frac{1}{4}(q_{00}+q_{10})+\frac{1}{2}q_{01} & \frac{1}{4}%
(q_{00}+q_{10})+\frac{1}{2}q_{01} & \frac{1}{2}(q_{00}+q_{10}) \\ 
G_{2} & \frac{1}{4}(q_{00}+q_{01})+\frac{1}{2}q_{10} & \frac{1}{2}%
(q_{00}+q_{01}) & \frac{1}{4}(q_{00}+q_{01})+\frac{1}{2}q_{10}%
\end{array}%
\]%
Recall that (\ref{example solution}) implies $q_{00}<q_{01}\approx q_{10}$,
hence our guess is confirmed.

While the expected add-on in each state is the same as in Section 3.1, the
inference behind the trading consumer types' add-on estimates is different.
For example, when the state is $(1,0)$, type $G_{1}$ correctly infers $%
\theta _{1}=1$ from the equilibrium price. While this realization by itself
is associated with a high add-on (because the only state in which $\theta
_{1}=1$ is $(1,0)$), the type's DAG leads him to assign probability $\frac{1%
}{2}$ to the state $(0,0)$, in which the add-on is at its lowest. Thus,
unlike the example in Section 3.1, a pessimistic inference about the state
variable the consumer regards as the direct cause of prices leads to an
optimistic add-on forecast.

\subsection{\textquotedblleft Anomalous\textquotedblright\ Market
Fluctuations}

In competitive markets, fluctuations in prices and allocations reflect
supply and demand responses to external shocks. When REE fully reveals all
payoff-relevant information, these responses are as if the information is
public. In this sub-section, we demonstrate that under the DAG-based
extension of our model, equilibrium supply and demand responses to shocks
exhibit patterns that are impossible in REE (or under the basic model).
First, we show that although supply and demand shocks in our model are
perfectly correlated (negatively for most of the paper, positively in the
variant of Section 4), the supply and demand responses can be nearly
independent. Second, we extend the model by endowing consumers with private
information, and show that even though equilibrium prices fully reveal all
payoff-relevant aspects of the state, they can also respond to fluctuations
in consumers' private information. The common theme in both sub-sections is
that markets with imperfectly discerning consumers are more volatile
relative to REE.

\subsubsection{How Supply and Demand Co-Move}

The state $\theta $ in our model determines a zero-sum transfer from
consumers to firms. Therefore, under rational expectations, supply and
demand move in opposite directions in response to fluctuations in $\theta $.
This is evident from equations (\ref{REE price and add-on}), which
characterize REE. (In the variant of Section 4, shocks are of a
\textquotedblleft common value\textquotedblright\ nature, hence supply and
demand would move in tandem in response to shocks under rational
expectations.)

Now suppose $\mathcal{G}$ consists of a single \textquotedblleft fully
coarse\textquotedblright\ consumer, who does not perceive any correlation
between price and quality. This consumer will exhibit an absolutely rigid
demand, such that equilibrium price fluctuations only reflect supply
responses to shocks.

However, our model can also generate virtually independent supply and demand
movements in response. For illustration, let $\theta =(\theta _{1},\theta
_{2},\theta _{3})$, $\theta _{i}\in \{0,1\}$ for every $i$. Assume that $%
S(\theta )=\alpha _{1}\theta _{1}+\alpha _{2}\theta _{2}+\alpha _{3}\theta
_{3}+b$, where $b>0$ is a constant; and the weights $\alpha _{i}$ are all
positive and different from each other. Moreover, let $\alpha _{1},\alpha
_{3}\approx 0$, whereas $\alpha _{2}$ is bounded away from zero, such that
the maximal feasible add-on is almost entirely a function of $\theta _{2}$.
Assume that $\mu $ satisfies the following properties: $\theta _{1}$ and $%
\theta _{2}$ are statistically independent, and $\theta _{3}$ is some
function of these two state variables. Under this specification, the supply
function mainly responds to fluctuations in $\theta _{2}$, and exhibits
virtually no response to the other state variables conditional on $\theta
_{2}$.

Finally, assume $\mathcal{G}$ consists of a DAG $G:\phi \leftarrow \theta
_{1}\rightarrow \theta _{3}\rightarrow \theta _{2}\rightarrow q$. Even
though $\theta _{1}$ and $\theta _{2}$ are objectively independent, they may
be correlated according to the subjective belief $p_{G}$, as long as both $%
\theta _{1}$ and $\theta _{2}$ are correlated with $\theta _{3}$, since%
\[
p_{G}(\theta _{2}\mid \theta _{1})=\sum_{\theta _{3}^{\prime }}p(\theta
_{3}^{\prime }\mid \theta _{1})p(\theta _{2}\mid \theta _{3}^{\prime }) 
\]%
Eliaz et al. (2021) showed that this spurious subjective correlation can be
quite large.\footnote{%
Unlike other examples in this paper, $G$ displays a misunderstanding of the
statistical behavior of exogenous variables, in addition to the
misperception of how edogenous variables vary with them.}

In our context, what this observation means is that consumer demand will be
highly responsive to prices, because the consumer correctly infers $\theta
_{1}$ from the equilibrium price while exaggerating the correlation between $%
\theta _{1}$ and $q$ (as a result of the erroneous perception that $\theta
_{1}$ and $\theta _{2}$ are correlated). Thus, while supply will be almost
entirely a function of $\theta _{2}$, demand will be a function of $\theta
_{1}$. Since these two state variables are objectively independent, supply
and demand responses to external shocks will be virtually orthogonal. This
pattern of fluctuations is impossible in our basic model (which subsumes REE
as a special case).

\subsubsection{Partially Informed Consumers}

So far, we have assumed that consumers have no\ information about the state
(other than what they can learn from prices). Since equilibrium prices are
fully revealing, this lack of information is irrelevant if consumers have
rational expectations. We will now see that this irrelevance no longer holds
when consumers are imperfectly discerning.

To explore the role of partial consumer information, extend the model as
follows. For every consumer type $G$, there is a distinct variable $w_{G}$
which represents a noisy private signal of $\theta $ that type-$G$ consumers
observe. These consumers admit $w_{G}$ as a variable in their causal model,
such that $R(w_{G})$ is contained in the set of nodes that represent $\theta 
$, and $w_{G}$ itself is not a parent of any other node. Thus, the consumer
understands that $w_{G}$ is merely a signal of the exogenous state
variables, and therefore not a (direct or indirect) cause of any other
variable. For instance, $G$ can be%
\begin{equation}
\begin{array}{lllllll}
&  & w_{G} &  &  &  &  \\ 
&  & \uparrow & \nwarrow &  &  &  \\ 
\phi & \leftarrow & \theta _{1} & \rightarrow & \theta _{2} & \rightarrow & q%
\end{array}
\label{extended short chain}
\end{equation}

Extend $\mu $ to be a joint distribution over $\theta $ and $w=(w_{G})_{G\in 
\mathcal{G}}$. The induced joint distribution $p$ over all variables is
defined as usual. Thus, when the market price is $\phi $, a type-$G$
consumer uses the conditional subjective belief $p_{G}(q\mid \phi ,w_{G})$
to predict the add-on. When $G$ is given by (\ref{extended short chain}), we
can see that the consumer infers $\theta _{1}$ from the market price $\phi $%
, and then uses both this inference and his knowledge of $w_{G}$ to form a
conditional belief over $\theta _{2}$, and hence $q$.\footnote{%
Note that $\phi $ and $w_{G}$ do not form a clique in $G$. Consequently, $%
p_{G}(q\mid \phi ,w_{G})$ need not satisfy the unbiased-on-average property,
even if $G$ is perfect (see Spiegler (2020b)). Since we do not use this
property in the sequel, we also drop the assumption that $\mathcal{G}$
consists of perfect DAGs.}

The basic result that interior equilibrium fully reveals $\theta $ continues
to hold in this extended model. That is, in an interior equilibrium $h$, $%
\theta ^{\prime }\neq \theta $ implies $h(\theta ^{\prime },w^{\prime })\neq
h(\theta ,w)$. The proof is the same as in the case of Proposition \ref%
{fulll reveal}, and therefore omitted. However, the next result establishes
that equilibrium prices can \textit{also} reflect consumers' private
information, even when we hold $\theta $ fixed.

The result relies on the following notion of path blocking (in the spirit of
similar definitions in the literature on graphical probabilistic models ---
see Pearl (2009)). We say that a set of nodes $M$ \textit{blocks all
non-directed paths} between nodes $i,j\notin M$ if in the non-directed
version of $G$ (in which we ignore the direction of links), every path
between $i$ and $j$ passes through some $k\in M$. For example, in the DAG (%
\ref{extended short chain}), $\{\theta _{2}\}$ blocks all non-directed paths
between $w_{G}$ and $q$, whereas $\{\theta _{1}\}$ does not.\bigskip

\begin{proposition}
\label{prop fundamental}Suppose $\mathcal{G}$ is a set of DAGs that includes
non-rational consumer types. Moreover, suppose that for every non-rational $%
G\in \mathcal{G}$, $R(\phi )$ does not block all non-directed paths between $%
w_{G}$ and $q$. Then, assuming the interior equilibrium $h$ does not
coincide with REE, there must be a state $\theta $ and signals $w,w^{\prime }
$, such that $h(\theta ,w)\neq h(\theta ,w^{\prime })$.\bigskip 
\end{proposition}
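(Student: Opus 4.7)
The plan is to prove the contrapositive. Suppose $h(\theta,w)=h_0(\theta)$ is independent of $w$; I will show that $h$ must then coincide with REE. By the extension of Proposition \ref{fulll reveal} to the partially informed setting (noted in the text to have the same proof), $h_0$ is one-to-one on $\Theta$, so prices reveal $\theta$. Market clearing at $(\theta,w)$ in the $\varepsilon\rightarrow 0$ limit requires the price to equal the maximum net willingness to pay:
\[
h_0(\theta) = v^* - \min_{G\in\mathcal{G}} E_G[q\mid \phi = h_0(\theta), w_G].
\]
Because the left-hand side does not depend on $w$, the minimum on the right must be constant in $w$ for each fixed $\theta$.

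The conceptual backbone is the following elementary observation on separable minima: if $(f_G)_{G\in\mathcal{G}}$ is a finite family of real-valued functions, each $f_G$ depending only on its own argument, and $\min_G f_G(w_G)$ equals a constant $c$ across the product of the marginal supports, then at least one $f_G$ is identically equal to $c$. Otherwise, choose for each $G$ some $w_G^*$ with $f_G(w_G^*)>c$ and evaluate at $w^* = (w_G^*)_G$ to obtain $\min_G f_G(w_G^*)>c$, a contradiction. Applied here with $f_G(w_G) := E_G[q\mid h_0(\theta),w_G]$, this yields, for every $\theta$, a type $G^\dagger(\theta)\in\mathcal{G}$ whose add-on estimate at price $h_0(\theta)$ is constant in its own signal $w_{G^\dagger(\theta)}$.

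Next, I would invoke the path-blocking hypothesis: for every non-rational $G\in\mathcal{G}$, $R(\phi)$ fails to block all non-directed paths between $w_G$ and $q$ in $G$, which by standard results for Bayesian-network factorizations implies that $w_G$ and $q$ are conditionally dependent given $R(\phi)$ under the subjective belief $p_G$. Since $h_0$ is one-to-one (so that conditioning on $\phi = h_0(\theta)$ is equivalent to conditioning on $R(\phi)$), this forces $E_G[q\mid h_0(\theta),w_G]$ to depend non-trivially on $w_G$ for every non-rational $G$. Hence $G^\dagger(\theta)$ cannot be non-rational; it must be the rational type, which therefore belongs to $\mathcal{G}$. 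Because the rational type's estimate in a fully revealing equilibrium is $\bar{q}(\theta)$, we conclude $\bar{q}(\theta)=v^*-h_0(\theta)$ for every $\theta$. Substituting into (\ref{price and addon}) yields $h_0(\theta)=v^*+\Delta-S(\theta)$ and $\bar{q}(\theta)=S(\theta)-\Delta$, which are precisely the REE formulae (\ref{REE price and add-on}). This completes the contrapositive.

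The main obstacle is the probabilistic step that converts the graph-theoretic path-blocking failure into non-constancy of $E_G[q\mid h_0(\theta),w_G]$ in $w_G$. The paper drops the perfect-DAG restriction in this section, so one cannot invoke the special structure of perfect DAGs; the argument must rest on general separation/dependence results for Bayesian networks (see Pearl (2009)). A secondary, milder issue is that the separable-minima lemma implicitly presumes that the conditional support of $w$ given $\theta$ contains a product of marginal supports, a regularity assumption on the joint distribution of states and signals that the proposition implicitly relies upon.
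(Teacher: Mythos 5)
Your proposal is correct and follows essentially the same route as the paper's proof: constancy of the price in $w$ forces the price-setting type's add-on estimate to be constant in its own signal, the failure of $R(\phi)$ to block all non-directed paths between $w_G$ and $q$ rules this out (for generic $\mu$) for every non-rational type, so the rational type must set the price, which via (\ref{price and addon}) yields the REE formulae. Your separable-minima observation makes explicit a step the paper glosses over (why a single type's estimate must be constant across signal realizations), and the two caveats you flag at the end --- genericity of $\mu$ in converting non-blocking into genuine dependence, and richness of the support of $w$ given $\theta$ --- are exactly the assumptions the paper's own argument relies on (the first it states explicitly, the second it leaves implicit).
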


Thus, the presence of imperfectly discerning consumers can create excessive
price fluctuations, in the sense that equilibrium prices respond to factors
beyond economic fundamentals. Specifically, they can reflect consumers'
private information, whereas this would not happen if consumers had rational
expectations. For instance, When $G$ is given by (\ref{extended short chain}%
), equilibrium prices respond to $w_{G}$ because consumers do not infer $%
\theta _{2}$ from prices.

\section{Related Literature}

Our paper contributes to a small literature on competitive markets with
asymmetric information, in which consumers' beliefs systematically deviate
from rational expectations. The closest precedent is Eyster and Piccione
(2013), who study dynamic competitive markets for financial securities
without short-selling, in which traders have diversely coarse models of the
market environment. This class of subjective models is similar to what we
assumed in Section 2, and a subset of our DAG-based formalism. Apart from
the different economic settings --- a dynamic financial market vs. a static
consumer market --- the main difference between Eyster and Piccione (2013)
and the present model is that traders in the Eyster-Piccione model do not
draw any inferences from current prices, whereas the heart of our model is
consumers' imperfect attempt to infer latent variables from current prices.

Piccione and Rubinstein (2003) analyze a simple example of a dynamic,
complete-information competitive market, in which producers differ in their
ability to perceive temporal price patterns, and hence in their ability to
predict market prices when making costly production decisions. They
demonstrate \textquotedblleft the existence of equilibrium fluctuations that
are unrelated to fundamentals...\textquotedblright\ (Piccione and Rubinstein
(2003, p. 218)), thus offering a precursor to Section 5.3 in our paper.

Our model fits naturally into the Behavioral Industrial Organization
literature (see Spiegler (2011) for a textbook treatment and Heidhues and K%
\H{o}szegi (2018) for a review). A prominent strand in this literature
analyzes market competition when firms use hidden fees (or other latent
product features) as part of their competitive strategy. Most of this
literature (going back to Gabaix and Laibson (2006)) has assumed that
consumers are unaware of the hidden charges and evaluate market alternatives
as if they do not exist.\footnote{%
In spiegler (2006), products have many dimensions, and consumers base their
product evaluation on a single, randomly drawn dimension.}

More generally, the behavioral IO literature has mostly assumed that
consumers have rational expectations, or that they have no understanding at
all of firms' incentives and therefore make no inferences from observations
that in fact indicate firms' attempts to exploit consumers' biases or
limitations. A few exceptions have examined market models in which consumers
have a \textit{coarse} understanding of market prices. Spiegler (2011, Ch.
8) synthesizes examples of bilateral-trade models with adverse selection
(extracted from Eyster and Rabin (2005), Jehiel and Koessler (2008), and
Esponda (2008)), in which the uninformed party has a coarse perception of
price formation.\footnote{%
In Esponda (2008), as in the present paper, consumers' assessment of firms'
types is based on the empirical distribution of \textit{active }firms at the
equilibrium price. There is no aggregate uncertainty in Esponda's model and
therefore no need to ask how consumers infer an aggregate state from
equilibrium prices.} At the extreme, this agent's belief is entirely coarse,
such that he correctly perceives average prices without having any
understanding of how they depend on the state of Nature.

In a similar vein, Murooka and Yamashita (2023) study a bilateral-trade
setting in which, with some probability, the buyer believes that product
quality is independent of the price in which it is traded. Ispano and
Schwardmann (2023) study a model in which consumers fail to understand that
only high-quality firms have an incentive to disclose their quality.
Schumacher (2023) studies a model in which firms sell a superior product
that only charges a base price and an inferior product that also includes an
add-on component. Coarse consumers know the average add-on charge across
products but incorrectly believe it is independent of the product type.
Thus, as in our model, consumers are aware of hidden charges but have
limited ability to predict them based on their information. Antler (2023)
analyzes a model of multilevel marketing and pyramid schemes, where a
principal exploits a network of agents having coarse expectations regarding
the network formation process, in the spirit of Jehiel (2005). These models
are all game-theoretic, and they lack the crucial feature of the present
paper, namely consumers' heterogeneous ability to draw inferences from
equilibrium prices.

\section{Conclusion}

The standard theory of competitive markets gives a central role to
equilibrium prices' ability to aggregate information. This property,
however, relies on market participants' ability to decipher the price
signal. This paper developed a new model of a competitive market in which
consumers differ in this regard, and explored the theoretical implications
of this \textquotedblleft cognitive friction\textquotedblright\ for the way
equilibrium outcomes respond to exogenous shocks.

The paper's methodological contribution inheres in our novel supply function
(arising from firms' differential ability to realize state-dependent latent
profit), our model of how consumers infer latent quantities from equilibrium
prices, and the tractable \textquotedblleft Bellman\textquotedblright\
characterization of interior equilibrium. The paper's substantive
conclusions include the deviation of equilibrium prices and add-ons from
their rational-expectations benchmarks, the equilibrium shift from latent to
salient price components as the set of consumer types expands, and the
demonstration that market outcomes respond to exogenous variables in ways
that are impossible under rational expectations.

We conclude the paper with a discussion of some of our modeling
procedures.\bigskip 

\noindent \textit{The homogenous-preference limit}

\noindent Our analysis in this paper has focused on the $\varepsilon
\rightarrow 0$ limit. A criticism of this approach is that on one hand our
full-revelation result (Proposition \ref{fulll reveal}) relies on preference
heterogeneity, yet our equilibrium analysis studies what happens when this
heterogeneity is almost non-existent. A counter-argument is that our
procedure is analogous to a common practice in the repeated games literature
(e.g., Mailath and Samuelson (2006)): Assuming players apply a discount
factor $\delta $ to future payoffs and then studying equilibria in the $%
\delta \rightarrow 1$ limit. The justification for that procedure is that
while discounting captures a key behavioral motive in long-term
interactions, assuming that this motive is weak enables a simple, clean
understanding of the logic of long-run cooperation. Likewise in our context,
preference heterogeneity allows equilibrium prices to reflect supply-side
responses to external shocks. This ensures that consumers' task of
deciphering equilibrium prices is meaningful. At the same time, assuming
weak taste heterogeneity enables us to focus on consumers' diverse add-on
forecasts.\bigskip

\noindent \textit{Non-uniformly distributed }$\pi $

\noindent The assumption that firm types $\pi $ are uniformly distributed
plays a facilitating role in our analysis, because it generates a linear
supply function. The Bellman-like equation (\ref{bellman}) arises from the
combination of two equations: The indifference condition for the \textit{%
marginal} firm type $\pi ^{\ast }(\theta ,h(\theta ))$, and consumers'
maximal willingness to pay for the product in state $\theta $ (which is
equal to $h(\theta )$ in the $\varepsilon \rightarrow 0$ limit). The latter
equation involves the \textit{average} active firm type $\bar{\pi}(\theta
^{\prime },h(\theta ^{\prime }))$ in various states $\theta ^{\prime }$.
When $\pi $ is uniformly distributed, $\pi ^{\ast }(\theta ^{\prime
},h(\theta ^{\prime }))$ and $\bar{\pi}(\theta ^{\prime },h(\theta ^{\prime
}))$ are linearly related, which enables us to conveniently substitute one
for the other. This also ensures that (\ref{bellman}) defines a contraction
mapping. If $\pi $ does not obey a uniform distribution, the tractable
linear structure of (\ref{bellman}) is lost, and a generalization of the
Bellman-like form will replace it. However, as long as the deviation from a
uniform distribution is not too large, the equilibrium equations will
continue to define a contraction mapping, such that the uniqueness of
interior equilibrium will prevail.\bigskip

\noindent {\LARGE Appendix: Omitted Proofs}\bigskip 

\noindent {\large Proposition \ref{characterization}}

\noindent Equation (\ref{bellman}) is an immediate consequence of (\ref%
{eq_max_WTP}) and (\ref{price and addon}). By definition, $\overline{q}%
(\theta )\in \lbrack \frac{1}{2}S(\theta ),S(\theta )]$ for every $\theta $.
Thanks to the $\frac{1}{2}$ coefficient on the R.H.S of (\ref{bellman}), it
is then clear that the equation defines a contraction mapping over a compact
and convex Euclidean space. By the contraction mapping theorem, it has a
unique solution. This also uniquely pins down the values of $h(\theta )$ and 
$\pi ^{\star }(\theta ,h(\theta ))$ for every $\theta $.

We now obtain the bounds on $\bar{q}(\theta )$. Equation (\ref{bellman})
implies $2\bar{q}(\theta )\leq \max_{\theta }S(\theta )-\Delta +\max_{\theta
^{\prime }}\bar{q}(\theta ^{\prime })$ for every $\theta $. Therefore, $%
2\max_{\theta }\bar{q}(\theta )\leq \max_{\theta }S(\theta )-\Delta
+\max_{\theta }\bar{q}(\theta )$, such that $\max_{\theta }\bar{q}(\theta
)\leq S^{\max }-\Delta $. Likewise, (\ref{bellman}) implies $2\bar{q}(\theta
)\geq \min_{\theta }S(\theta )-\Delta +\min_{\theta ^{\prime }}\bar{q}%
(\theta ^{\prime })$ for every $\theta $. Therefore, $2\min_{\theta }\bar{q}%
(\theta )\geq \min_{\theta }S(\theta )-\Delta +\min_{\theta }\bar{q}(\theta
) $, such that $\min_{\theta }\bar{q}(\theta )\geq S^{\min }-\Delta $.

It remains to show that $\pi ^{\star }(\theta ,h(\theta ))\in (0,1)$ for
every $\theta $ --- i.e., the equilibrium is interior. Equivalently, we need
to show that for every $\theta $, $\frac{1}{2}S(\theta )<\bar{q}(\theta
)<S(\theta )$. Assume $\bar{q}(\theta )\geq S(\theta )$ for some $\theta $.
Then, (\ref{bellman}) implies%
\[
S(\theta )-\Delta +\min_{M\in \mathcal{M}}\sum_{\theta ^{\prime }}\mu
(\theta ^{\prime }\mid \theta _{M})\bar{q}(\theta ^{\prime })\geq 2S(\theta
) 
\]%
Since $\bar{q}(\theta ^{\prime })\leq S^{\max }-\Delta $ for every $\theta
^{\prime }$, $S^{\max }-S(\theta )\geq 2\Delta $. By definition, this means $%
S^{\max }-S^{\min }\geq 2\Delta $, contradicting (\ref{condition theorem}).
Therefore, $\bar{q}(\theta )<S(\theta )$ for every $\theta $. Now assume $%
\bar{q}(\theta )\leq \frac{1}{2}S(\theta )$ for some $\theta $. Then, (\ref%
{bellman}) implies%
\[
S(\theta )-\Delta +\min_{M\in \mathcal{M}}\sum_{\theta ^{\prime }}\mu
(\theta ^{\prime }\mid \theta _{M})\bar{q}(\theta ^{\prime })\leq S(\theta ) 
\]%
Since $\bar{q}(\theta ^{\prime })\geq S^{\min }-\Delta $ for every $\theta
^{\prime }$, $S^{\min }-2\Delta \leq 0$, contradicting (\ref{condition
theorem}). $\blacksquare $\bigskip

\noindent {\large Proposition \ref{prop average S}}

\noindent Suppose $\mathcal{M}=\{M\}$, where $M$ is arbitrary. Take an
expectation of both sides of (\ref{bellman}) with respect to $\mu $. Then,%
\[
2\sum_{\theta }\mu (\theta )\bar{q}(\theta )=\sum_{\theta }\mu (\theta
)S(\theta )-\Delta +\sum_{\theta ^{\prime }}\sum_{\theta }\mu (\theta )\mu
(\theta ^{\prime }\mid \theta _{M}^{\prime }=\theta _{M})\bar{q}(\theta
^{\prime }) 
\]%
As we observed above,%
\[
\sum_{\theta }\mu (\theta )\mu (\theta ^{\prime }\mid \theta _{M}^{\prime
}=\theta _{M})=\mu (\theta ^{\prime }) 
\]%
Therefore, the expected Bellman equation becomes%
\[
2\sum_{\theta }\mu (\theta )\bar{q}(\theta )=\sum_{\theta }\mu (\theta
)S(\theta )-\Delta +\sum_{\theta }\mu (\theta )\bar{q}(\theta ) 
\]%
such that $\sum_{\theta }\mu (\theta )\bar{q}(\theta )=\bar{S}-\Delta $,
which is the REE level. Thus, for any singleton $\mathcal{M}$, the expected
add-on in interior equilibrium coincides with its REE level. By Proposition (%
\ref{prop adding types}), for any $\mathcal{M}\supset \{M\}$, the expected
add-on level in interior equilibrium is weakly lower in each state than
under $\{M\}$. It follows that the ex-ante expected add-on is weakly below
the REE level $\bar{S}-\Delta $.

Recall that by (\ref{quality}), $\bar{q}(\theta )\geq \frac{1}{2}S(\theta )$
for every $\theta $. Therefore, the ex-ante expected add-on cannot fall
below $\frac{1}{2}\bar{S}$. We now construct primitives $\Theta ,\mu ,S,%
\mathcal{M}$ that satisfy $\sum_{\theta }\mu (\theta )S(\theta )=\bar{S}$
and condition (\ref{condition theorem}), and show that the expected add-on
in the interior equilibrium under this specification is arbitrarily close to 
$\frac{1}{2}\bar{S}$. Let $\theta _{i}\in \{0,1\}$ for every $i=1,...,n$,
where $n$ is arbitrarily large. Let $e_{i}$ denote the state $\theta $ for
which $\theta _{i}=0$ and $\theta _{j}=1$ for all $j\neq i$. distribution $%
\mu $ as follows: $\mu (0,...,0)=\alpha $ and $\mu (e_{i})=(1-\alpha )/n$
for every $i$. We will pin down $\alpha $ below. Define the function $S$ as
follows: $S(0,...,0)\gtrapprox 2\Delta $, and $S(e_{i})\lessapprox 4\Delta $
for every $i=1,...,n$. Fix $\alpha $ such that $\bar{S}\approx \alpha \cdot
2\Delta +(1-\alpha )\cdot 4\Delta $, i.e., $\alpha \approx (4\Delta -\bar{S}%
)/2\Delta $. Finally, let $\mathcal{M}$ consist of the following types: the
rational type $\{1,...,n\}$, and the coarse types $\{i\}$ for every $%
i=1,...,n$.

Guess an equilibrium in which the rational type buys the product in the
state $(0,...,0)$; and the coarse type $\{i\}$ buys the product in the state 
$e_{i}$, for every $i=1,...,n$. The Bellman-like equations are thus reduced
to%
\[
2\bar{q}(0,...,0)=S(0,...,0)-\Delta +\bar{q}(0,...,0) 
\]%
and%
\[
2\bar{q}(e_{i})=S(e_{i})-\Delta +\frac{\alpha }{\alpha +\frac{1-\alpha }{n}}%
\bar{q}(0,...,0)+\frac{\frac{1-\alpha }{n}}{\alpha +\frac{1-\alpha }{n}}\bar{%
q}(e_{i}) 
\]%
for every $i=1,...,n$. It follows that $\bar{q}(0,...,0)=S(0,...,0)-\Delta
\approx \Delta $; and as $n\rightarrow \infty $, the solution to the
remaining equations is $\bar{q}(e_{i})\approx 2\Delta $. It is
straightforward to confirm that the types that buy the product in each state
have the lowest add-on estimate in that state. The ex-ante equilibrium
add-on is approximately%
\[
\alpha \cdot \Delta +(1-\alpha )\cdot 2\Delta \approx \frac{1}{2}\bar{S} 
\]%
as required. $\blacksquare $\bigskip

\noindent {\large Proposition \ref{proposition range}}

\noindent A rational type's willingness to pay in state $\theta $ is $%
v^{\ast }-\bar{q}(\theta )$. Therefore, $h(\theta )\geq v^{\ast }-\bar{q}%
(\theta )$ for every $\theta $. Plugging the upper bound on $\bar{q}(\theta
) $ given by Proposition \ref{characterization}, we obtain%
\[
h(\theta )\geq v^{\ast }-(S^{\max }-\Delta )=2v^{\ast }-c-S^{\max } 
\]

Now consider the state $\theta $ for which $S(\theta )=S^{\min }$. The
rational type's willingness to pay in this state is $v^{\ast }-\bar{q}%
(\theta )$. The willingness to pay of an arbitrary type $M$ is 
\begin{equation}
v^{\ast }-\sum_{\theta ^{\prime }}\mu (\theta ^{\prime }\mid \theta _{M})%
\bar{q}(\theta ^{\prime })  \label{type M WTP}
\end{equation}

Guess a solution to (\ref{bellman}) for which $\bar{q}(\theta )=S^{\min
}-\Delta $. Then, the rational type's willingness to pay in state $\theta $
is $v^{\ast }-(S^{\min }-\Delta )$. By the lower bound on $\bar{q}(\theta )$
given by Proposition \ref{characterization}, this expression is weakly above
(\ref{type M WTP}) for any $M$. Then, guessing that the rational type has
the highest willingness to pay in $\theta $ is consistent with a solution to
(\ref{bellman}) in this state, and it gives $h(\theta )=v^{\ast }-(S^{\min
}-\Delta )$. The remaining equations in (\ref{bellman}) for all other states
deliver a unique solution, hence the guess is consistent with the entire
system of equations. It follows that when $\mathcal{M}$ contains a rational
type, the upper bound on equilibrium prices given in part $(ii)$ is binding. 
$\blacksquare $\bigskip 

\noindent {\large Proposition \ref{mutually_beneficial_welfare}}

\noindent In REE, consumer welfare is null as 
\[
h(\theta )=v^{\star }+\bar{q}(\theta )=v^{\star }+\frac{1+\pi ^{\star
}(\theta ,h(\theta ))}{2}S(\theta ) 
\]%
Let $\pi ^{\prime }=(1+\pi ^{\star }(\theta ,h(\theta ))S(\theta )/2$. Trade
with any type $\pi \in \lbrack \pi ^{\star }(\theta ,h(\theta )),\pi
^{\prime })$ yields a welfare loss to the consumer. Since, by definition,
firms of type $\pi ^{\star }(\theta ,h(\theta ))$ earn zero in equilibrium
in state $\theta $, continuity implies that there is a cutoff $\pi ^{\prime
\prime }\in (\pi ^{\star }(\theta ,h(\theta )),\pi ^{\prime })$ such that
trade with firms of type $\pi <\pi ^{\prime \prime }$ is socially harmful.
By Proposition \ref{mutually_beneficial_addition}, expanding the set of
cognitive types weakly increases the equilibrium price in each state. Hence,
the cutoff $\pi ^{\star }(\theta ,h(\theta ))$ weakly decreases in every
state. It follows that the measure of firms that trade in equilibrium weakly
goes up in every state. $\blacksquare $\bigskip

\noindent {\large Lemma \ref{lemma beta}}

\noindent Since $G$ is perfect, there is an equivalent DAG $G^{\prime }$ (in
the sense that $p_{G}\equiv p_{G^{\prime }})$ in which $\phi $ is an
ancestral node (see Spiegler (2020a,b)). Therefore, we can regard $\phi $ as
ancestral, without loss of generality. If there is a direct link $\phi
\rightarrow q$, then $(\phi ,q)$ form a clique in $G$, and hence perfection
implies $p_{G}(q,\phi )\equiv p(q,\phi )$, hence $p_{G}(q\mid \phi )\equiv
p(q\mid \phi )$ whenever $p(\phi )>0$. Since $p$ is fully revealing, $%
p(q\mid \phi )\equiv p(q\mid \theta _{p}(\phi ))$. In this case, the unique
transition matrix $\beta $ for which (\ref{beta representation}) holds is $%
\beta (\theta \mid \theta )\equiv 1$.

Now suppose there is no path from $q$ to $\phi $. Then, $q$ is independent
of $\phi $ according to $p_{G}$, such that $p_{G}(q\mid \phi )\equiv p(q)$.
We can thus rewrite%
\[
p_{G}(q\mid \phi )=\sum_{\theta ^{\prime }}p(\theta ^{\prime })p(q\mid
\theta ^{\prime }) 
\]%
In this case, the unique transition matrix $\beta $ for which (\ref{beta
representation}) holds is $\beta (\theta ^{\prime }\mid \theta )\equiv \mu
(\theta ^{\prime })$.

Now suppose there is a path from $q$ to $\phi $, but the two nodes are not
directly related. Note that all nodes along all paths from $q$ to $\phi $
represent $\theta $ variables. Let $C$ denote the set of nodes to which $%
\phi $ sends direct links, and let $D$ denote the set of nodes that send
direct links into $q$. Then,%
\[
p_{G}(q\mid \phi )=\sum_{\theta _{C}}p(\theta _{C}\mid \phi )\sum_{\theta
_{D}}p_{G}(\theta _{D}\mid \theta _{C})p(q\mid \theta _{D}) 
\]%
Since $p$ is fully revealing, $p(\theta _{M}\mid \phi )$ assigns probability
one to the projection of $\theta _{p}(\phi )$ on the variables represented
by $C$, denoted $\theta _{C}(\phi )$.

Therefore, 
\[
p_{G}(q\mid \phi )=\sum_{\theta _{D}}p_{G}(\theta _{D}\mid \theta _{C}(\phi
))p(q\mid \theta _{D})=\sum_{\theta _{D}}p_{G}(\theta _{D}\mid \theta
_{C}(\phi ))\sum_{\theta ^{\prime }}p(\theta ^{\prime }\mid \theta
_{D})p(q\mid \theta ^{\prime })
\]%
Denote%
\[
\beta (\theta ^{\prime }\mid \theta )=\sum_{\theta _{D}^{\prime \prime
}}p_{G}(\theta _{D}^{\prime \prime }\mid \theta _{C})\sum_{\theta ^{\prime
}}p(\theta ^{\prime }\mid \theta _{D})
\]%
The R.H.S of this equation is pinned down by $G$ and $\mu $. Thus, it is the
unique transition matrix for which (\ref{beta representation}) holds.
Moreover, the property that $\mu $ is an invariant distribution of $\beta $
is an immediate consequence of (\ref{no distortion of average quality}). $%
\blacksquare $\bigskip 

\noindent {\large Proposition \ref{prop fundamental}}

\noindent\ Assume the contrary --- i.e., $\mathcal{G}$ satisfies the
premises of the result, and yet the interior equilibrium $h$ is purely a
function of $\theta $. By assumption, $h(\theta )$ deviates from the REE
price in some $\theta $. In that state,%
\[
h(\theta )=v^{\ast }-\int_{q}p_{G}(q\mid h(\theta ),w_{G})q
\]%
for all realizations of $w_{G}$ --- since by assumption, $h$ is unresponsive
to $w_{G}$ given $\theta $. However, by assumption, the DAG $G$ has the
property that there is a non-directed path between $w_{G}$ and $q$ that does
not pass through any node in $R(\phi )$. For generic distributions $\mu $,
this means that type $G$'s belief over $q$ conditional on $h(\theta )$ is
not invariant to $w_{G}$, hence this type's willingness to pay varies with $%
w_{G}$ given $\theta $. As a result, the equilibrium price cannot be
constant in $w_{G}$ given $\theta $, a contradiction. $\blacksquare $

\end{document}